\newif\ifcslaw
\newcommand{\mr}[1]{\textcolor{magenta}{[MR: #1]}}
\newcommand{\proposedchange}[1]{{#1}}
\newcommand{\ch}[1]{\textcolor{olive}{[CH: #1]}}
\newcommand{\bl}[1]{\textcolor{blue}{[BL: #1]}}
\newcommand{\sib}[1]{\textcolor{cyan}{[SIB: #1]}}
\renewcommand{\mr}[1]{}
\renewcommand{\ch}[1]{}
\renewcommand{\bl}[1]{}
\renewcommand{\sib}[1]{}
\DeclareMathOperator{\Ei}{Ei}
\newcommand{\dif}{\mathop{}\!\mathrm{d}}
\def\mc{\mathcal}
\renewcommand{\b}[1]{\left[#1\right]}
\newcommand{\E}{\mathbb{E}}
\newcommand{\Eb}[1]{\mathbb{E}\b{#1}}
\newcommand{\EE}[2]{\mathbb{E}_{#1}\left[#2\right]}
\newcommand{\iidsim}{\overset{\text{iid}}{\sim}}
\def\given{\; | \;}
\newcommand\numberthis{\addtocounter{equation}{1}\tag{\theequation}}
\newcommand{\ind}{\mathbb{I}}
\newcommand{\indic}[1]{\mathbb{I}\left\{{#1} \right\}}
\def\ceil#1{\lceil #1 \rceil}
\def\floor#1{\lfloor #1 \rfloor}
\newcommand{\paren}[1]{\left( {#1} \right)}
\newcommand{\supp}[1]{\texttt{supp}\left( {#1} \right)}
\newcommand{\sqparen}[1]{\left[ {#1} \right]}
\newcommand{\median}{\textnormal{\texttt{median}}}
\newcommand{\defeq}{\triangleq}
\newcommand{\abs}[1]{\left| #1 \right|}
\def\mubound{\bar{\mu}^{\text{bounded}}}
\def\mueb{\bar{\mu}^{\text{eb}}}
\def\mumono{\bar{\mu}^{\text{mono}}}
\theoremstyle{plain}
\newtheorem{theorem}{Theorem}[section]
\crefname{theorem}{Theorem}{Theorems}
\newtheorem{proposition}[theorem]{Proposition}
\newtheorem{lemma}[theorem]{Lemma}
\newtheorem{corollary}[theorem]{Corollary}
\theoremstyle{definition}
\newtheorem{definition}[theorem]{Definition}
\newtheorem{assumption}[theorem]{Assumption}
\theoremstyle{remark}
\crefname{assumption}{Assumption}{Assumptions}
\Crefname{assumption}{Assumption}{Assumptions}
\crefname{corollary}{Corollary}{Corollary}
\Crefname{corollary}{Corollary}{Corollary}
\newcounter{muassumption}
\renewcommand{\themuassumption}{A\arabic{muassumption}}
\newcommand{\assumplabel}[1]{\leavevmode\refstepcounter{muassumption}(\themuassumption)\label{#1}}
\crefname{muassumption}{assumption}{assumptions} 
\Crefname{muassumption}{Assumption}{Assumptions}
\newcommand{\N}{\mathbb{N}}
\newcommand{\cA}{\mathcal{A}}
\newcommand{\cD}{\mathcal{D}}
\newcommand{\cE}{\mathcal{E}}
\newcommand{\cF}{\mathcal{F}}
\newcommand{\cP}{\mathcal{P}}
\newcommand{\cX}{\mathcal{X}}
\newcommand{\cY}{\mathcal{Y}}
\newcommand{\bbP}{\mathbb{P}}
\newcommand{\Dtrain}{D^{\mathrm{train}}}
\newcommand{\Dtest}{D^{\mathrm{test}}}
\begin{document}
        \title{Statistical Guarantees in the Search for Less Discriminatory Algorithms}
        \author{Chris Hays\thanks{MIT}
        \and
        Ben Laufer\thanks{Cornell University}
        \and
        Solon Barocas\thanks{Microsoft Research}
        \and
        Manish Raghavan\thanks{MIT}}
    \maketitle
\fi

\begin{abstract}
\proposedchange{U.S. discrimination law can impose liability on firms that fail to adopt a less discriminatory alternative (LDA): a decision policy that achieves the same business objectives while reducing disparate impact on legally protected groups. Recent scholarship argues that this doctrine has direct implications for algorithmic decision-making in high-stakes domains such as employment, lending, and housing, potentially obligating firms to search for “less discriminatory algorithms” \citep{black_less_2024}. Regulators have at times encouraged proactive LDA searches, reinforcing the expectation of a good-faith effort to identify equally performant models with lower disparate impact.}

\proposedchange{Model multiplicity makes such searches plausible: retraining with different random seeds can yield models with comparable predictive performance but materially different disparate impacts. Yet firms cannot retrain indefinitely, raising a central question: when is the search sufficient to demonstrate good faith?}

We formalize LDA search under multiplicity as an optimal stopping problem in which a developer seeks to produce evidence that further search is unlikely to yield meaningful improvements. Our main contribution is an adaptive stopping algorithm that provides a high-probability upper bound on the best disparate-impact gains attainable through continued retraining, enabling developers to certify (e.g., to a court) that additional search is unlikely to help. We also show how stronger distributional assumptions over the model space can yield tighter bounds, and we validate the approach on real-world credit and housing datasets.

\end{abstract}

\ifcslaw
    \begin{document}
\fi

\ifcslaw
\title{Statistical Guarantees in the Search for Less Discriminatory Algorithms}
\fi

\ifcslaw
    \author{Chris Hays}
    \email{{jhays@mit.edu}}
    \affiliation{%
      \institution{MIT}
      \city{Cambridge}
      \state{MA}
      \country{USA}
    }
    \author{Ben Laufer}
    \affiliation{%
      \institution{Cornell Tech}
      \city{New York}
      \state{NY}
      \country{USA}
    }
    \author{Solon Barocas}
    \affiliation{%
      \institution{Microsoft Research}
      \city{New York}
      \state{NY}
      \country{USA}
    }
    \author{Manish Raghavan}
    \orcid{1234-5678-9012}
    \affiliation{%
      \institution{MIT}
      \city{Cambridge}
      \state{MA}
      \country{USA}
    }
\else
\fi




\ifcslaw
    \keywords{less discriminatory algorithms, sequential decision-making, anytime-valid inference}
\fi

\ifcslaw
\maketitle
\fi

\section{Introduction}



Data-driven models increasingly underpin decision making in critical domains like employment, credit, and housing. While these models have been embraced for their potential to improve the quality and efficiency of such decision making, the literature on algorithmic fairness has shown that predictive models can also perpetuate or exacerbate societal biases, leading to potentially unfair outcomes \citep{barocas2023fairness}.\looseness=-1




Recent work argues that in such high-stakes settings, firms building data-driven decision-making systems should proactively search for ``less discriminatory algorithms'' (LDAs) \citep{black_less_2024, gillis_operationalizing_2024, caro2024modernizing}, or predictive models with equal overall performance but less disparate impact across legally protected groups.\footnote{In the United States, disparate impact in these sectors is typically operationalized as the difference in selection rates across groups (e.g., differences in the hiring, lending, or leasing rates across racial, gender, or age groups).} \proposedchange{These arguments draw on the legal doctrine that has developed over the past half century around disparate impact, which stipulates that firms may face liability if they refuse to adopt a less discriminatory alternative that could achieve their business goals as effectively as some baseline decision-making policy.}

In support of their argument is the empirical finding that models optimized for accuracy can vary substantially with respect to other performance measures (like {disparate impact}), \textit{even if the training procedure used is exactly the same} \citep{marx_predictive_2020,d2022underspecification,rudin2024amazing,black2022model}. 
This is because training processes are almost always non-deterministic;
the subset of data used to train a model, the batch ordering in stochastic gradient descent, the set of features included as inputs, and any number of other aspects of a training algorithm are random.
A firm might thus hope to sample a large set of models with comparable predictive performance and select the one with minimal disparate impact.

Scholars, advocates, and regulators have argued that firms are well-positioned to search for LDAs because they oversee model training \citep{black_less_2024,FinRegLab_2023_ExplainabilityFairness,blower_cfpb_2023}. They have further argued that firms ought to take certain minimal steps to perform such searches, given that reductions in discrimination are sometimes achievable ``for free'' (i.e., without sacrificing accuracy) \citep{islam_can_2021,rodolfa_empirical_2021}. 
%


Others have been skeptical of the promise of LDAs, questioning whether they can really yield meaningful reductions in disparate impact and raising concerns about the lengths to which a firm must go to demonstrate a good-faith effort \citep{fools-gold-ii,scherer2019applying}. 
At the heart of this skepticism is the sense that a search for LDAs could potentially go on forever, given that additional searching might uncover an even less discriminatory alternative than what has been discovered already. Many modern machine learning optimization problems involve rich, infinite function classes and are highly non-convex, so it is often unreasonable to hope for global optima. (In some simpler machine learning methods or settings, however, it may be possible to find the \textit{least} discriminatory alternative \citep{gillis_operationalizing_2024}.)
As one financial services blog put it: ``no constraints or limits on this search have been proposed --- and it is unclear how much resources, time, and effort are expected in searching for these potential LDAs'' \citep{PaceAnalytics2022_SixUnansweredFairLendingQuestions}.
%

\proposedchange{ 
The relevant law, case law and related doctrine, and regulatory rules and guidance do not offer a consistent or precise answer to the question of how much---if any---costs firms are expected to incur in searching for or adopting an LDA. While the case law suggests that courts do not expect firms to shoulder \emph{significant} costs, it also suggests that firms cannot dismiss LDAs out-of-hand if they involve \emph{any} costs \cite{black_less_2024}. Similarly, while the Consumer Financial Protection Bureau, in recent years, called for firms to engage in the search for LDAs \cite{blower_cfpb_2023}, the Department of Housing and Urban Development has, at other times, explicitly stressed the idea that firms need not incur ``materially greater costs'' or ``other material burdens'' in adopting an LDA \citep{us_department_of_housing_and_urban_development_huds_2020}, only to later say that ``a proposed less discriminatory alternative [will not] fail simply because there will be some amount of increased cost associated with the alternative policy'' \citep{us_department_of_housing_and_urban_development_reinstatement_2023}. Taken together, these decisions and statements seem to imply a middle ground between two extremes that constitutes a ``good faith'' or ``reasonable'' investment. Yet it remains an open question how to operationalize this idea, especially when the cost at issue is the expense involved in \emph{developing} a model, not just \emph{adopting} one that might have lower performance.}

In this paper, we develop a framework for formalizing the concept of a sufficient search for less discriminatory algorithms in the model development process and a procedure for establishing that a firm has performed a sufficient search. 
%
%
%
%

\paragraph*{Our contributions.}

We develop statistical tools to quantify the value of an LDA search under model multiplicity.
We formalize LDA search as an optimal stopping problem, wherein a firm wants to continue training models as long as the marginal gain  from doing so (in disparate impact reduction) is sufficiently large.
This is a natural way to model the search process: the firm will not know about the models it may train until it actually trains them; thus, they should sequentially train new models until the returns from doing so are no longer worth the model training costs.
Our primary contribution is an optimal stopping algorithm (\Cref{alg:infinite}) and theorem (\Cref{thm:finite}) to quantify and bound the value of continuing a search for LDAs.
Our theorem provides a high-probability upper bound on the marginal value of training additional models, allowing a firm to stop an LDA search when its value is sufficiently low.
Thus, our methods also provide a \textit{certificate} of the limited benefits to a continued search, allowing the firm to demonstrate to a third party (e.g.,~a court or internal compliance team) that it has conducted a reasonable search.
%
Our framework allows for the firm to impose knowledge about data and model distributions in order to further refine our algorithm's guarantees.
Under stronger assumptions, we establish correspondingly stronger upper bounds on the marginal value of training additional models.


At a technical level, our algorithm establishes general high-probability guarantees for marginal returns of additional samples when sampling from an unknown distribution.
%
We draw on recent results on \textit{anytime-valid inference}, which allow us to adaptively stop training models while maintaining statistical validity.
In particular, we develop a novel and asymptotically near-optimal sequence upper-bounding the probability of improving upon a running best sample drawn iid from any distribution, which may be of independent interest.

We also evaluate our algorithm empirically on a number of publicly available datasets related to credit and housing.
We randomly retrain models across standard machine learning methods and measure the stopping time of our algorithm against the optimal full-information stopping time.
%
We find significant heterogeneity in the true, full-information marginal returns to retraining, and in the performance of the algorithm relative to this idealized benchmark. 
In all of our dataset/model class pairs, training a small number of models was enough: after about 60 models, the marginal gains in performance drop into the hundredths of a percent per new model; in several, the marginal gains were extremely small after less than 10 models trained.


\proposedchange{Our proposed procedure is just one piece of a larger and more complex set of steps that a firm might take to search for a less discriminatory algorithm.
%
In real cases, debates about the existence of a less discriminatory algorithm might cover a swath of both quantitative and qualitative considerations about the reasonableness of model assumptions, variables used, and so forth \citep{black_less_2024}. 
Our procedure may be helpful in a wholistic evaluation of the search:
For example, if our procedure establishes that it is not worth training beyond one hundred models, then a firm can stop at this time and allocate effort to a search in other parts of the ML pipeline. Moreover, if the firm trains a million models but our procedure establishes that one hundred was sufficient to achieve nearly the same benefits, a regulator might determine that search effort had been misallocated.
We also believe our framework may be useful in formalizing and reasoning about sufficiency of search along other dimensions, such as the amount of data collected or which features are recorded, in cases where such decisions can be formalized as samples from a distribution.}

%
%
There are many applications beyond the search for less discriminatory algorithms to which our framework might be applied. 
For example, one might use our approach to certify that they had done a sufficient search over the hyperparameter space for a given model class, by defining a (e.g., uniform) distribution over hyperparameters.
In this case, the search objective would be a higher performing model, not a less discriminatory one.
This might be especially fruitful in industry as a way to certify to a manager that a sufficient search was conducted for the best-performing model or as a way to certify to a journal or conference referee that a sufficient search for a strong baseline model was conducted.

Finally, our framework might be relevant to other problems of societal consequence.
For example, in computational redistricting, the goal is to partition a geographic area so each part satisfies certain properties like contiguity and population balance.
Popular approaches involve sampling from a distribution of valid districts, since direct optimization can be computationally intractable \citep{deford_recombination_2021,mccartan_sequential_2023}.
Our approach could be used to, e.g., repeatedly resample from a distribution of valid partitions to optimize for these properties, although if applying our approach to Markov chain Monte Carlo methods, one would have to account for interdependencies between samples.

\paragraph{Related work.} 
Our work is motivated by a literature on less discriminatory algorithms, model multiplicity and fairness/accuracy tradeoffs \citep{black_less_2024,black2022model,coston_characterizing_2021,rodolfa_empirical_2021,laufer_what_2025,gillis_operationalizing_2024,cen_audits_2025,fallah_statistical_2025, rudin2024amazing,dai2025intentional}. This literature surfaces the idea that there may be many highly accurate models, and that retraining models may yield predictors with different properties, especially with respect to fairness.
Our work addresses an important and unanswered question in this area: \textit{How do we certify the sufficiency of a search for a particular model retraining process?}

Our framework is closely related to a long literature in economics and computer science~\citep{degroot2004optimal,beyhaghi2024recent,lippman1976economics,bikhchandani1996optimal}, for which there are many classic applications beyond model retraining.
Our model is similar to the classic Pandora's Box problem \citep{weitzman1978optimal,kleinberg2016descending,beyhaghi2024recent}, in which the decision-maker pays a cost to sample from a \textit{known} distribution.
However, our work is different in that we assume minimal knowledge of the distributions.
Also, rather than trying to maximize total utility of a search, we seek a high-probability guarantee on the marginal returns of drawing another sample.

\paragraph{Organization of the paper.} In \Cref{sec:model}, we formalize our setting, including the model retraining process and our goals. In \Cref{sec:theory}, we describe our theoretical results, including an algorithm for adaptively training models and a theorem with corresponding guarantees on the correctness of the stopping time.
In \Cref{sec:empirical}, we validate our method on real-world datasets for credit and housing.
Finally, in \Cref{sec:conclusion}, we discuss other applications of our technical approach and conclude.

\section{Setting and model} \label{sec:model}

At a high level, we study the problem of learning a predictive machine learning model from a finite dataset.
The firm's utility for a predictor is determined by its average performance on a loss function over the population distribution.
In the search for LDAs, the loss function might be the difference in selection rates of a protected group versus that of a reference group.

The firm seeks to take advantage of model multiplicity to reduce this loss by sampling multiple high performing models and selecting the least discriminatory among them (i.e., the one that minimizes disparate impact).
Our target is to design a procedure which determines when a sufficient search has been conducted during the re-training process.

We assume the model trainer pre-specifies (1) a cost for sampling an additional model by repeating a randomized training procedure and (2) a utility for a unit improvement to disparate impact.
The ratio of these quantities specifies a target threshold for determining whether the marginal benefit of retraining models is worth the cost: if the expected benefit from training a new model is above the threshold, the model trainer should do so, and if it is below the threshold, the trainer should terminate the retraining procedure and deploy the best model seen so far.
In the remainder of this section, we formalize this setting and define notation.

\paragraph{Data and utility.} We will assume the existence of an unknown population distribution $\mc D$ from which the firm has sampled an iid dataset $D$ of size $n$, consisting of labeled data pairs $(x, y) \in \cX \times \cY$. 
The firm will deploy a predictor $h \; : \; \cX \to \cY$.
In cases where the predictor determines outcomes (like offers of employment, credit, or housing), $\cY$ will be binary, where $1$ is the positive outcome.
The firm's utility will be defined as
\begin{align*}
    Q(h)
    &\triangleq \EE{(x, y) \sim \mc D}{\ell(h(x), y, x)}
\end{align*}
for $\ell$ given and $\mathrm{im}(Q) \subseteq [0, 1]$.\footnote{This is without loss of generality: any bounded loss function can be rescaled so the loss is on $[0,1]$. Our proposed methods therefore work for loss functions beyond disparate impact; for example, a firm could minimize a weighted combination of disparate impact and error rate instead of disparate impact alone.}
If the goal is to reduce disparity in selection rates with respect to a group indicator $g(x) \in \{0, 1\}$ as in the search for LDAs, $\ell$ would be written as $\ell^{\mathrm{DI}}(a, y, x) = \paren{({1 - g(x)})/{P(g(x) = 0)} - {g(x)}/{P(g(x) = 1)}} a$,
which is $a/P(g(x)=0)$ if $g(x) = 0$ and $- a/P(g(x)=1)$ if $g(x) = 1$.
Then, the expected selection rate disparity of the model would be given by
    $Q^{\mathrm{DI}}(h) = \E[h(X) \given g(X) = 0] - \E[h(X) \given g(X) = 1],$
i.e., the difference between the selection rate for the reference group and the selection rate for the protected group.
The loss is bounded in $[0, 1]$ if the selection rate for the protected group ($X$ for which $g(X) = 1$) is never greater than that for the reference group ($X$ for which $g(X) = 0$). This is reasonable because discrimination against the protected group is not a concern if their selection rate is higher than the reference group.\footnote{If selection rate disparity is a concern for both groups (i.e., both groups are protected), this loss could alternately represent the absolute value of the difference between selection rates between groups.}
%
However, our results are not solely relevant to $\ell$ as the selection rate disparity: our results hold for any outcome space $\cY$ and loss function $\ell$ as long as the range of $Q$ is bounded in $[0,1]$.

%
%

The model trainer cannot observe their true utility.
Instead, we will assume they have access to a finite sample of data on which they will evaluate their model.
The empirical performance will be defined for a fixed dataset $S$, as 
\begin{align*}
    \hat{Q}(h; S)
    &\triangleq \frac{1}{|S|} \sum_{i \in S} \ell(h(x_i), y_i, x_i).
\end{align*}

\paragraph{Model distribution.} The model trainer will have a randomized training procedure $\mc A$ that takes in a dataset $D$ and returns a model $h$.
There are no assumptions on the procedure $\mc A(D)$, except that it is fixed in advance and returns a model iid conditional on the data $D$.

While we assume the model trainer has a fixed dataset $D$, we do not necessarily assume that all models are trained on the same training sample.
Instead, the data may be partitioned into subsets $\Dtrain$ and $\Dtest$, where $h = \mc A(D)$ depends only on the training subset $\Dtrain$ and not on the remaining data $\Dtest = D \backslash \Dtrain$.
Additionally, $\cA$ is not restricted to produce models from any particular model class or setting of hyperparameters---it does not need to be a standard model training process for a fixed model class.
For example, $\cA$ might first randomly decide between multiple algorithms (which themselves might be randomized), like random forests or neural networks.
Alternately, $\cA$ might sample from a given distribution over hyperparameters.

We will analyze the setting in which a model trainer trains a sequence of models $h_1, h_2, \dots$ by sampling iid, conditional on $D$, from $\cA(D)$.
Let $\Dtrain_1, \Dtrain_2, \dots$ be the sequence of training splits and $\Dtest_1, \Dtest_2, \dots$ be the sequence of test splits.
(Recall $\Dtrain_t \cup \Dtest_t = D$ for all $t$, so train and test splits for different steps $t$ will have shared data.)
For brevity, we will write the true and empirical loss of the $t$-th model as
\begin{align*}
    Q_t \defeq Q(h_t), \quad\quad \text{and} \quad\quad
    \hat Q_t \defeq \hat Q(h_t; \Dtest_t).
\end{align*}
We will denote by $P$ the distribution of the infinite sequence $Q_1, Q_2, \dots$. (When just considering the first $t$ entries of this sequence, we will imagine throwing the rest away so as to not introduce new notation.)
We will denote by $\hat P$ the distribution of the infinite sequence $\hat Q_1, \hat Q_2, \dots$ similar to $P$, and assume that $P$ and $\hat P$ are defined on the same space.
All distributions and probabilities throughout this work are taken conditional on $D$, since we imagine there is one fixed dataset used for training and evaluation.
Note that $P$ and $\hat P$ are supported on (a subset of) $[0, 1]^\infty$, since $Q_t, \hat Q_t \in [0, 1]$ by assumption.
Also, note that $\{ Q_{t} \}_{t=1}^\infty$ are iid, conditional on $D$.
Let $P_0$ be the marginal distribution of any $Q_t$.
Similarly, $\{ \hat Q_{t} \}_{t=1}^\infty$ are iid  conditional on $D$ and we will denote the marginal distribution of any $\hat Q_t$ by $\hat P_0$.
Finally, let $\bbP$ be the joint probability distribution over the pairs $(Q_1, \hat Q_1), \dots$ and let $\bbP_0$ be the marginal distribution over any $(Q_t, \hat Q_t)$.
%

We will analyze the model with the best performance on the test split, after the trainer concludes training. 
Formally, for given $t$, let $i_t$ be the model with the lowest empirical disparate impact up to the $t$-th model:
    $i_t = \arg \min_{i \in [t]} \hat Q_i$.
We will analyze the case where, after the model trainer trains $\tau$ models, they select and deploy $h_{i_\tau}$.
The true and empirical disparate impact of the \textit{selected} model after training $t$ models will be denoted
\begin{align*}
    U_t \defeq Q_{i_t}, \quad\quad \text{and} \quad\quad
    \hat U_t \defeq \hat Q_{i_t}.
\end{align*}
%

In the context of an LDA search, we assume the models sampled from $\cA(D)$ are all \textit{deployable}, in the sense that a sample from $\mc A(D)$ meets the business needs of the firm.
If this is not true for some model training process, rejection sampling can be used to continue retraining until a deployable one is found.
In practice, this may be accomplished by, for example, setting an accuracy threshold and letting $\mc A(D)$ be samples from the model training distribution, conditional on sufficient accuracy.\footnote{In other settings, $\ell$ might represent accuracy itself, in which case the search would be for more accurate models. However, our motivation for this work is clarifying the debate around LDAs. The model multiplicity literature argues that models optimized for accuracy will have similar accuracy but perhaps differences in other properties \citep{black2022model,rodolfa_empirical_2021}.} 
\proposedchange{This approach---setting an accuracy threshold and searching for a model with lower disparate impact---reflects the fact that the legal framework focuses on a search for alternatives among procedures that meet business needs. In other settings, beyond the LDA framework, $\ell$ could encode some fairness-accuracy trade-off via a weighted combination of different objectives.}

\paragraph*{Certifying a sufficient search.}
For given cost of training a single model $c$ and utility for a unit improvement to disparate impact $b$, the model trainer is justified in terminating a search after training $\tau$ models if 
    $b \cdot \E_{\bbP_0}[{U_\tau - U_{\tau+1} \given \hat U_\tau}] \le c,$
i.e., the expected marginal benefit of training an additional model, given the observed best model so far, does not outweigh the cost.
Equivalently, we will write 
\begin{align*}
    \E_{\bbP}[U_\tau - U_{\tau+1} \; | \; \hat U_\tau ] \le \gamma
    \numberthis \label{eq:true-stop-condition}
\end{align*}
where we define $\gamma \triangleq c / b$.
%
%
%
Our definition requires the model trainer to continue sampling models as long as the expected benefits outweigh the cost.
But our information is limited in two ways:
    First, we do not know $P$.
    Second, we can only observe noisy estimates of $Q_t$ due to our finite data sample.
Thus, we can only hope to upper bound the left-hand expression of~\Cref{eq:true-stop-condition}, with high probability over $\tau$, given this uncertainty.

Finally, we note that, while our guarantee is written in terms of the marginal benefits of training a single additional model, our results hold for the marginal benefit of training $k \geq 1$ additional models.
This is because the marginal benefits of retraining are monotonically non-increasing in each additional model, while the costs remain linear.
Thus, in our framework, if there exists some $k > 0$ for which training $k$ more models is worth the marginal cost, training one more model is worth the marginal cost.

\section{Adaptive Stopping for Repeated Model Retraining} \label{sec:theory}

Our main theoretical contribution is an adaptive algorithm (\Cref{alg:infinite}) and accompanying theoretical result (\cref{thm:finite}).
The algorithm gives a procedure for training models until a stopping condition is met.
The theorem establishes that, when the algorithm halts, the marginal benefits of retraining can be concluded to be no longer worth the costs.
We also establish that the algorithm always halts at some finite time that depends on $\gamma$ and gives a data-independent upper bound on the number of models that need to be trained.

Our plan for the section is as follows.
To build intuition, in \Cref{sub:fullinfo,sub:infinite}, we start with analyses of simpler settings. 
In \Cref{sub:fullinfo}, the distribution of model performance is known, and observations of performance are observed exactly as if they were evaluated on infinite data (i.e.,~$\hat Q_t = Q_t$ for all $t$).
In this regime, the stopping problem is trivial and can be described by a threshold on draws from the model peformance distribution.
Next, in \Cref{sub:infinite}, we relax the first condition and do not assume full knowledge of the model performance distribution.
We outline how different conditions on the model performance distribution yield different bounds, and our method allows decision-makers to input assumptions suitable to their context.
Then, in \Cref{sub:finite}, we handle the additional uncertainty from evaluations on finite data.
To do so, we introduce a natural assumption on the relationship between observed and true model performance.
Finally, in \Cref{sub:est-mrl}, we consider the case in which estimation of a property of the model loss distribution can be leveraged to produce tighter bounds on marginal benefits of model retraining.
All proofs are deferred to \Cref{sec:proofs}.

\subsection{The full-information regime} \label{sub:fullinfo}
We first consider the simplest case, when both the distribution $P$ is known and the population values of $Q_t$ are exactly observed.
For any $t$, note that
   $U_t - U_{t+1}
    =
    (U_t -Q_{t+1}) \cdot \ind\b{U_t > Q_{t+1}}$.
Thus, if the performance of the best model so far is $u$, the expected marginal gain of a new sample is
\begin{align*}
    g(u)
    &\triangleq
    \E_{Q \sim P_0}\sqparen{(u - Q) \cdot \ind\b{u > Q}}. 
    \numberthis \label{eq:stopping-lhs-simplified}
\end{align*}
Observe that $g$ is weakly monotonically increasing, and $g(0) = 0$.
Therefore, there is some threshold $u_P^*$ at which the marginal gain drops below $\gamma$. Define this threshold as follows:
\begin{align*}
    u_{P}^*
    &\triangleq \sup_{u \in [0, 1]} \{u : g(u) \le \gamma\}.
\end{align*}
Thus, our stopping time $\tau$ satisfies the desired guarantee~\Cref{eq:true-stop-condition} if and only if
\begin{equation}
    \E_{P_0}\sqparen{U_\tau - U_{\tau+1} \given U_\tau} \le \gamma
    \Longleftrightarrow
    g(U_\tau) \le \gamma
    \Longleftrightarrow U_\tau \le u_P^*.
    \label{eq:guarantee-equivalence}
\end{equation}
This immediately yields a stopping condition: compute 
$u_P^*$ and sample until a value less than $u_P^*$ is observed.
The stopping time $\tau$ in this case is geometrically distributed, since each sample is less than $u^*_P$ with probability $P_0(u^*_P \ge U_{\tau+1})$, and so the expected stopping time is $1/P_0(u^*_P \ge U_{\tau+1})$.

\subsection{The infinite-data regime} \label{sub:infinite}

Without knowledge of $P$, the developer cannot compute $u_P^*$.
We next consider the case where $P$ is unknown, but we can perfectly observe $Q_t$ for all $t$.
%
Because of our uncertainty about $P$, we cannot always guarantee \Cref{eq:true-stop-condition} for finite $\tau$: there is always a chance that the sequence $\{ Q_s \}_{s=1}^t$ observed so far have been abnormally large (i.e., an especially unlucky sequence), so that the expected marginal gain of a new sample is greater than $\gamma$.
The best we can do is ensure that it holds \textit{with high probability}, over the randomness of $\{Q_t\}_{t=1}^\infty$.
That is, for a pre-specified $\delta \in (0, 1)$, we want
\begin{equation}
    P(\E_{P_0}\sqparen{U_\tau - U_{\tau+1} \given U_\tau} \le \gamma) = P(g(U_\tau) \le \gamma) \ge 1-\delta,
    \label{eq:hp-guarantee}
\end{equation}
where the expectation is over $U_{\tau+1}$ marginally and the probability is over all $t$ jointly.
Our goal is thus to provide an anytime-valid upper bound on $\{g(U_t)\}_{t=1}^\infty$. 
%
That is, suppose we had a sequence $\{\bar g_t(U_t)\}_{t=1}^\infty$ such that
\begin{align*}
    P( \exists t \in \N \; : \;~ g(U_t) > \bar g_t(U_t) ) \le \delta.
\end{align*}
%
Then, it suffices to stop sampling at $\tau$ such that $\bar g_\tau(U_\tau) \le \gamma$, since this immediately provides a high-probability bound on $g(U_\tau)$.
%


We have thus reduced our stopping problem to maintaining an anytime-valid upper bound for $g(U_t)$.
Our next step is to actually construct such a bound.
To do so, we decompose $g(\cdot)$ into two terms: One which captures the probability of observing a strictly better sample, and another which captures the expected improvement \textit{conditional} on observing a strictly better sample. Observe
\begin{align*}
    g(u)
    &= 
    \E_{Q \sim P_0}\sqparen{u - Q \given u > Q}
    P_0(u > Q) 
    = \mu(u) p(u),
\end{align*}
where we define, for a draw of $Q$ iid from $P_0$,
    $\mu(u)
    \triangleq
    \E_{P_0}\sqparen{u - Q \given u > Q}$ and $p(u) \triangleq P_0(u > Q).$
We will call $\mu$ the \textit{conditional expected improvement} (CEI)\footnote{This concept is closely related to that of the \textit{mean residual life} of a random variable, for which there is a rich literature. See, e.g., \citet{almudevar_estimation_2020}.} and $p$ the \textit{improvement probability}.
%
It suffices to upper bound each of these separately and then combine them. 

\paragraph*{Bounding $\mu$.}
We first formalize our goal for bounds on $\mu$.
We will then provide explicit bounds under a variety of potential assumptions on the distribution.
The definition is written for a generic distribution $\cP$ since we will reuse this definition later in the finite-data case.
\begin{definition}[$\bar \mu $-Bounded CEI for $\cP$] \label{def:bounded-mrl}
    $\bar \mu: [0, 1] \to [0, 1]$ is a CEI bound for distribution $\cP$ if
    $$\E_{Q \sim \cP}\sqparen{u - Q \given u > Q} \leq \bar \mu(u)$$
    for all $u \in [0, 1]$, almost surely.
\end{definition}

We can use the fact that $\cP$ is supported on $[0, 1]$ almost surely to derive an immediate bound satisfying \Cref{def:bounded-mrl}:
$\E_{Q \sim \cP}[u - Q \given u > Q] \le u$, since $Q \ge 0$.
Thus, $\bar \mu^{\text{universal}}(u) \triangleq u$ satisfies \Cref{def:bounded-mrl}, making it a valid upper bound for $\mu$.
%

\begin{table}[t]
    \centering
    \begin{tabularx}{\linewidth}{p{0.3\textwidth} p{0.28\textwidth} X}
        \toprule
        Assumption & Interpretation & $\bar \mu$ \\
        \midrule
        No assumption &
        Applies to any distribution &
        $\bar \mu^{\text{universal}}(u) \triangleq u$ \\
        \assumplabel{as:monotone}
        $\exists a > 0$ s.t. $f_{P_0}(x)$ is increasing for $x \le a$ &
        $P_0$ has a sub-uniform left tail &
        $\bar \mu^{\text{mono}}(u) \triangleq \begin{cases}
            u & u > a \\
            \frac{u}{2} & u \le a
        \end{cases}$ \\
        \assumplabel{as:exp}
        $\exists a > 0$ s.t. $\mu(u)$ is increasing for $u \le a$ &
        $P_0$ has an exponential or sharper left tail &
        $\bar \mu^{\text{exp}} \triangleq \begin{cases}
            u & u > a \\
            \min(\mu(a), u) & u \le a
        \end{cases}$
        \\
        \assumplabel{as:bounded}
        $\exists a > 0$ s.t. $P_0(Q < a) = 0$ &
        No model has disparate impact lower than $a$ &
        $\bar \mu^{\text{bounded}} \triangleq u-a$
        \\
        \bottomrule
    \end{tabularx}
    \caption{Assumptions on $P_0$ and corresponding bounds $\bar \mu$.}
    \label{tab:assumptions}
\end{table}

This bound is quite conservative, since it bounds \textit{expected improvement} by \textit{maximum possible improvement.}
In what follows, we provide a series of assumptions on $P_0$ under which we can derive tighter bounds $\bar \mu$ satisfying \Cref{def:bounded-mrl}. These are summarized in \Cref{tab:assumptions}.
%

%
%
Our universal bound is conservative because it is tight only when $P_0$ places all mass to the left of $u$ at $0$.
Intuitively, this means that any model that performs better than $u$ is perfect (i.e.,~has $0$ disparate impact).
In practice, it would be surprising for this to be the case.
Instead, we might expect a continuum of models, where better models are more rare than worse models.
We can formalize this intuition by imposing assumptions on $P_0$, each of which leads to a different bound $\bar \mu$.

First, consider the case when $P_0$ is continuous, and there exists $a \in (0, 1)$ such that $f_{P_0}(x)$ is non-decreasing in $x$ for all $x \le a$ (i.e., $P_0$, at worst, has a uniform-like left tail).
This captures our intuition that better models are at least as rare as worse models.
By this assumption, $\mu(u) \leq \int_{0}^u x/u \dif x = u/2$.
Thus, we can define $\mumono(u)$ as $u/2$ if $u \le a$ and $u$ otherwise, giving us an upper-bound on $\mu$.

We could strengthen this assumption by requiring $P_0$ to put strictly decreasing mass on better models.
In other words, we could assume that the left tail of $P_0$ drops sufficiently quickly.
If the left tail of $P_0$ drops at an exponential rate, then there exists some $b > 0$ such that $\mu(u) = b$ for sufficiently small $u$.
We generalize this property to capture distributions with tails that fall off at least exponentially quickly.
Suppose there exists some $a \in (0, 1)$ such that the $\mu(u)$ is weakly increasing for all $u \leq a$, then the bound $\bar \mu^{\text{exp}}(u) = \min (\mu(a), u)$ if $u \leq a$ and $u$ otherwise satisfies \Cref{def:bounded-mrl}.

Finally, we consider the case where the best possible model is still bounded away from $0$, meaning there is no model with $0$ disparate impact.
Formally, suppose $P_0(Q < a) = 0$ for some $a \in (0, 1]$.
Then we can define $\mubound_t(u) \triangleq u - a.$
In practice, using these bounds requires the developer to believe, \textit{a priori} that one of these assumptions holds, and importantly, the particular value of the relevant parameter $a$.
In \Cref{sub:est-mrl}, we discuss how, under some assumptions, they can infer parameters of a bound from data.
%

\paragraph*{Bounding $p$.}

Having provided bounds for $\mu$, we next turn to the problem of bounding $p$.
The following lemma yields a general anytime-valid high probability upper bound for the probability of observing a new minimum in a sequence of iid random variables. We state the lemma for a general sequence of random variables because it may be of independent interest.
\proposedchange{Informally, the lemma establishes a sequence $\bar p_t$, depending on a confidence budget $\alpha$, such that with probability at least $1-\alpha$, $\bar p_t$ is greater than the minimum value seen so far among a sequence of iid random variables. The bound is \textit{anytime-valid} because it holds for all time steps $t$, rather than a traditional bound, which would hold for some $t$ fixed in advance.}
An asymptotically near-optimal (but more complex) sequence can be found in \Cref{thm:always-valid-min-iterated-log}.
\begin{restatable}{lemma}{avmin}
    \label{thm:always-valid-min}
    Let $\{X_t\}_{t=1}^\infty$ be a sequence of iid random variables distributed according to a law $\cP_0$. Let $\cP \defeq \cP_0^\infty$ be their joint distribution.
    Let $Y_t$ be the minimum of these variables up to time $t$, i.e., $Y_t \defeq \min_{s \le t} X_s$.
    For any $\alpha \in (0, 1)$, define an upper bound
    \begin{align*}
        \bar p_t(\alpha)
        &= \begin{cases}
            1 - e^{-1/\alpha} & \text{ if } t = 1\\
            1- \paren{\frac{(t-1)}{\alpha} + 1}^{-1/(t-1)} & \text{otherwise.} 
        \end{cases}
    \end{align*}
    Then, \proposedchange{the probability over $\cP_0$ that $X_{t+1}$ is less than $Y_t$ at any time $t$ is bounded by $\bar p_t(\alpha)$ with probability $1-\alpha$ over $\cP$. Formally,}
    \begin{align*}
        \cP(\exists t \in \N \; : \; ~ \cP_0(X_{t+1} < Y_t \given Y_t) > \bar p_t(\alpha)) \le \alpha.
    \end{align*}
\end{restatable}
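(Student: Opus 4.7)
My plan is to construct a nonnegative martingale $(M_t)_{t \ge 0}$ under the iid null whose level set $\{M_t \ge 1/\alpha\}$ coincides with $\{V_t \ge \bar p_t(\alpha)\}$, where $V_t \defeq \min_{s \le t} U_s$ for $U_s \iidsim \Unif(0,1)$, and then apply Ville's inequality. This is the classical \emph{method of mixtures} from anytime-valid inference.

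First I would reduce to the uniform case. Since $X_{t+1}$ is independent of $Y_t$, we have $\cP_0(X_{t+1} < Y_t \mid Y_t) = F(Y_t^-)$, where $F$ is the CDF of $\cP_0$. Coupling $X_s = F^{-1}(U_s)$ for iid $U_s \sim \Unif(0,1)$ via the quantile transform, one checks $F(X_s^-) \le U_s$ almost surely, so $F(Y_t^-) \le \min_{s \le t} U_s = V_t$. It therefore suffices to show $\cP(\exists\, t : V_t > \bar p_t(\alpha)) \le \alpha$ in the uniform case.

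Next, for each $\theta \in (0,1)$ define the likelihood-ratio process $L_t(\theta) \defeq (1-\theta)^{-t}\, \ind\{V_t \ge \theta\}$, i.e., the product over $s \le t$ of $(1-\theta)^{-1}\ind\{U_s \ge \theta\}$; this is the likelihood ratio of the alternative $\Unif(\theta, 1)^{\otimes t}$ against the null $\Unif(0,1)^{\otimes t}$. A one-line computation using $\cP(U_t \ge \theta) = 1-\theta$ shows $(L_t(\theta))_{t \ge 0}$ is a nonnegative martingale with $L_0(\theta) = 1$. Mixing against a $\Unif(0,1)$ prior on $\theta$ gives
\[
M_t \defeq \int_0^1 L_t(\theta)\dif\theta = \int_0^{V_t} \frac{\dif\theta}{(1-\theta)^t},
\]
which by Fubini is itself a nonnegative martingale with $M_0 = 1$. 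Evaluating the integral yields $M_1 = -\log(1-V_1)$ and, for $t \ge 2$, $M_t = \paren{(1-V_t)^{-(t-1)} - 1}/(t-1)$. A short algebra check confirms $M_t \ge 1/\alpha \Longleftrightarrow V_t \ge \bar p_t(\alpha)$ in both cases. Ville's inequality then gives $\cP(\sup_t M_t \ge 1/\alpha) \le \alpha$, which is exactly the claimed bound.

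The main technical obstacle is choosing the mixing prior: the uniform one is precisely the choice whose resulting $M_t$ admits a closed form whose $1/\alpha$-level set equals $\{V_t \ge \bar p_t(\alpha)\}$, and it is not obvious a priori that everything telescopes so cleanly. It is also worth noting that a naive union bound across $t$ on the pointwise probabilities $(1-\bar p_t(\alpha))^t$ does not sum to $\alpha$, so the anytime-valid guarantee genuinely relies on the martingale construction rather than a Bonferroni-style argument.
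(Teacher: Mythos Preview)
Your proposal is correct and follows essentially the same approach as the paper: reduce to the uniform case via the quantile transform, build the likelihood-ratio martingale $L_t(\theta)=(1-\theta)^{-t}\ind\{V_t\ge\theta\}$, mix it against a $\Unif(0,1)$ prior on $\theta$, and apply Ville's inequality. The only cosmetic difference is that you verify the equivalence $M_t\ge 1/\alpha \Leftrightarrow V_t\ge\bar p_t(\alpha)$ by computing $M_t$ in closed form, while the paper presents the same calculation as solving the integral equation $\int_0^{\bar p_t(\alpha)}(1-\theta)^{-t}\dif\theta=1/\alpha$ for $\bar p_t(\alpha)$.
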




\proposedchange{We note that the high probability bound is \textit{conditional on $Y_t$}, reflecting the fact that a stopping procedure may depend on the minimum seen so far. This stands in contrast with an unconditional bound, which could not depend on the data seen so far and so would not be relevant to a decision-maker deciding whether to stop at time $t$ or not, having seen the data up to time $t$.}
\Cref{thm:always-valid-min} yields an immediate anytime-valid upper bound on $\{p(U_t)\}$:
\begin{equation}
    \label{eq:pt-bound}
    P(\exists t \in \N \; : \; ~ p(U_t) > \bar p_t(\delta)) \le \delta.
\end{equation}

\paragraph*{Combining bounds.}

Our algorithm simply combines our bounds on $\mu$ and $p$ to maintain an anytime-valid upper bound on the marginal gain, given by $\bar \mu(U_t) \cdot \bar p_t(\delta)$.
Formally, our algorithm simply terminates at the first $\tau$ such that
    $\bar \mu_\tau(U_\tau) \cdot \bar p_\tau(\delta) \le \gamma.$
Moreover, $\tau$ is guaranteed to be finite because $\bar \mu_t(\cdot) \le 1$ for all $t$, and $\lim_{t \to \infty} \bar p_t(\delta) = 0$.
A data-independent upper bound on the maximum possible number of models trained by our algorithm can thus be directly computed from $\delta$ and $\gamma$ by finding the smallest $t$ such that $p_t(\delta) < \gamma$.
We state the algorithm for a generic distribution $\cP$ given as input, rather than $P_0$, since we will reuse this algorithm in the finite-data regime.
\begin{algorithm}[H]
    \caption{LDA Search with Adaptive Stopping}
    \begin{algorithmic}[1]
    \REQUIRE ~\\
        An unknown model performance distribution $\cP$ from which to draw iid samples. \\
        Stopping threshold $\gamma$ and failure probability $\delta$. \\
        Optional: An almost-sure expected conditional improvement bound $\bar \mu$ satisfying \Cref{def:bounded-mrl}.
        If not provided, use $\bar \mu^{\mathrm{universal}}(u) = u$.
        
    \FOR{$t = 1, 2, \dots$}
        \STATE Draw a new sample $X_t \iidsim \cP$. 
        \STATE Define $\bar p_t$ as in \Cref{thm:always-valid-min}.
        \STATE Define $Y_t = \min_{s \leq t} X_t$
        \IF{$\bar \mu(Y_{t}) \cdot \bar p_t(\delta) < \gamma$} \label{line:utltc}
            \RETURN {$Y_t$}
        \ENDIF
    \ENDFOR
    \end{algorithmic}
    \label{alg:infinite}
    \end{algorithm}

    We now state the formal statistical guarantee for our infinite data setting. It is a special case of a more general theorem we prove, \Cref{thm:generic}. 
    \begin{proposition} \label{prop:infinite}
        For all $\gamma, \delta > 0$, \Cref{alg:infinite} run with $\cP = P_0$, $\gamma, \delta$ and any $\bar \mu$ that satisfies \Cref{def:bounded-mrl} for $P_0$ as input terminates at a stopping time $\tau \in \N$ such that
        \begin{align*}
            P( \E_{P}[U_\tau - U_{\tau + 1} \given U_\tau ] < \gamma ) \geq 1 - \delta.
        \end{align*}
    \end{proposition}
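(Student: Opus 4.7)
The plan is to combine the two ingredients developed in \Cref{sub:infinite} --- the almost-sure CEI bound $\bar\mu$ and the anytime-valid improvement-probability bound from \Cref{thm:always-valid-min} --- via the factorization $g(u) = \mu(u)\,p(u)$, and then invoke the equivalence in \Cref{eq:guarantee-equivalence} to convert a high-probability bound on $g(U_\tau)$ into the desired bound on the conditional marginal gain.

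The first step is to introduce the ``good event''
\[
E \defeq \{\, p(U_t) \le \bar p_t(\delta) \text{ for all } t \in \N \,\}.
\]
When \Cref{alg:infinite} is run with $\cP = P_0$, its draws $X_t$ are distributed exactly as $Q_t$, so $Y_t = U_t$; applying \Cref{thm:always-valid-min} with $\alpha = \delta$ then yields $P(E) \ge 1 - \delta$. Next I would verify finite termination: by \Cref{def:bounded-mrl} the bound $\bar\mu$ takes values in $[0,1]$, and taking logarithms in the closed form for $\bar p_t(\delta)$ shows $\bar p_t(\delta) \to 0$ as $t \to \infty$, so there is a data-independent $T^\star(\gamma,\delta)$ past which $\bar\mu(Y_t)\,\bar p_t(\delta) < \gamma$ must hold deterministically. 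Hence $\tau \le T^\star < \infty$ and $U_\tau$ is well-defined on every sample path.

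On the event $E$, three inequalities chain at the stopping time: the stopping rule gives $\bar\mu(U_\tau)\,\bar p_\tau(\delta) < \gamma$; \Cref{def:bounded-mrl} gives $\mu(U_\tau) \le \bar\mu(U_\tau)$; and $E$ gives $p(U_\tau) \le \bar p_\tau(\delta)$. Multiplying these yields
\[
g(U_\tau) \;=\; \mu(U_\tau)\,p(U_\tau) \;\le\; \bar\mu(U_\tau)\,\bar p_\tau(\delta) \;<\; \gamma.
\]
Because $\tau$ is a stopping time adapted to the natural filtration of $(Q_t)_{t \ge 1}$ and $Q_{\tau+1} \sim P_0$ is independent of that filtration, the equivalence in \Cref{eq:guarantee-equivalence} extends from deterministic $t$ to $\tau$: conditional on $U_\tau$, the law of $Q_{\tau+1}$ is still $P_0$, so $\E_P[U_\tau - U_{\tau+1} \mid U_\tau] = g(U_\tau)$. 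Thus $g(U_\tau) < \gamma$ becomes $\E_P[U_\tau - U_{\tau+1} \mid U_\tau] < \gamma$, which together with $P(E) \ge 1-\delta$ is the claim.

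The main pitfall to watch for is precisely this optional-stopping subtlety --- one might worry that evaluating the bound at a data-adaptive $\tau$ requires extra union-bounding or a martingale argument. It is handled cleanly by the \emph{anytime-valid} form of \Cref{thm:always-valid-min}, which provides a uniform-in-$t$ control of $p(U_t)$ on a single event of probability $\ge 1-\delta$, so no further probabilistic machinery is needed beyond what has already been set up.
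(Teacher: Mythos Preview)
Your proposal is correct and follows essentially the same route as the paper: the paper proves \Cref{prop:infinite} as a special case of a generic result (\Cref{thm:generic}) whose proof chains exactly the three inequalities you identify --- the stopping condition $\bar\mu(U_\tau)\bar p_\tau(\delta)<\gamma$, the a.s.\ CEI bound $\mu\le\bar\mu$, and the anytime-valid bound on $p$ from \Cref{thm:always-valid-min} --- to conclude $g(U_\tau)<\gamma$ with probability at least $1-\delta$. The only presentational difference is that the paper chains probability inequalities directly (e.g.\ $\cP(\mu(U_\tau)p(U_\tau)>\gamma)\le\cP(p(U_\tau)>\bar p_\tau(\delta))\le\cP(\exists t:p(U_t)>\bar p_t(\delta))\le\delta$), whereas you isolate the good event $E$ first and argue pointwise on it; your explicit treatment of finite termination and of why the equivalence~\eqref{eq:guarantee-equivalence} extends to the random $\tau$ is in fact more careful than the paper's own proof, which takes both points for granted.
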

    Next, we generalize to the case where we have finite data.
    
\subsection{The finite-data regime} \label{sub:finite}


If we observe only finite data, we cannot perfectly observe each $Q_t$; instead, we observe $\hat Q_t$.
As before, we will seek to maintain an anytime-valid upper bound on the marginal gain.
We must take care to define the marginal gain appropriately---in particular, our goal is to bound the expected marginal gain with respect to the \textit{true} disparate impact ($Q_t$), given our observations of empirical disparate impact ($\hat Q_t$).
Formally, our goal is to show that, at stopping time $\tau$, 
\begin{align*}
    \E_{\bbP}[{U_\tau - U_{\tau + 1} \given \hat U_\tau }] \le \gamma.
\end{align*}
where the expectation is also conditional on $D$.
To do this, we need to establish a relation between the measurement error $U_t - \hat U_t$ at different points on the left tail of $\hat P_0$. 
We provide a natural assumption on the relationship between these quantities: the selection effect or regression-to-the-mean effect is, in expectation, non-decreasing in $t$. The assumption that regression-to-the-mean is at least constant is frequently supposed in the large literature on adjusting analysis for or estimating these effects \citep{stein1956inadmissibility,james1961estimation,sorensen1984estimation,andrews_inference_2024,zrnic_flexible_2024,fithian2014optimal}. 
Intuitively, this assumption holds for sub-Gaussian left tails where the selection effect should be linear in the gap between $\hat U_t$ and $\hat U_{t+1}$ and even for sub-exponential left tails where there should be constant regression to the mean in the gap between $\hat U_t$ and $\hat U_{t+1}$.
This assumption would not hold if some measurable set of values of $\hat U_t$ indicate that the model has low disparate impact, while models with $\hat U_{t+1} < \hat U_t$ have relatively high disparate impact.
{
\begin{assumption}[Non-decreasing selection effect] \label{assm:regression}
    It holds for all $t$ that
    \begin{align*}
        \E_{\bbP}[U_t - \hat U_t \given \hat U_t ] \geq \E_{\bbP}[U_{t + 1} - \hat U_{t + 1}\given \hat U_t ].
    \end{align*}
\end{assumption}
} 

Under \Cref{assm:regression}, we can apply \Cref{alg:infinite} on the sequence $\{ \hat U_t \}_{t=1}^\infty$ exactly the same as to how we applied it to $\{ U_t \}_{t=1}^\infty$ in the infinite data case.
This additional assumption is sufficient for the following theorem to hold, using only a minor modification to the argument applied in the infinite data case.
\begin{restatable}{theorem}{finite} \label{thm:finite}
    Under \Cref{assm:regression}, for all $\gamma > 0$ and $\delta > 0$, \Cref{alg:infinite} run with $\cP = \hat P_0$, $\gamma, \delta$ and any $\bar \mu$ that satisfies \Cref{def:bounded-mrl} for $\hat P_0$ terminates at a time $\tau \in \N$ such that
    \begin{align}
        \bbP(\E_{\bbP} [U_{\tau} - U_{\tau + 1} \given \hat U_\tau] \leq \gamma) \geq 1 - \delta. \label{eq:thm_finite}
    \end{align}
\end{restatable}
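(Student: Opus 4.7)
The plan is to reduce \cref{thm:finite} to \cref{prop:infinite} by applying the infinite-data argument to the empirical sequence $\{\hat U_t\}$, and then to transfer the resulting empirical guarantee back to the true marginal gain via \cref{assm:regression}. Conceptually, \cref{alg:infinite} treats $\hat U_t$ exactly as it treated $U_t$ in \cref{sub:infinite}, so the algorithmic analysis goes through unchanged in the empirical world; the only extra work is bridging the empirical-versus-true gap at the stopping time.

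Conditional on $D$, the observations $\hat Q_1, \hat Q_2, \dots$ are iid samples from $\hat P_0$, so the analysis of \cref{sub:infinite} applies verbatim with $\hat P_0$ in place of $P_0$. In particular, the product decomposition $g(\cdot) = \mu(\cdot)\,p(\cdot)$ holds for $\hat P_0$, the input CEI bound $\bar \mu$ dominates $\mu$ for $\hat P_0$ by hypothesis, and \cref{thm:always-valid-min} applied to $\{\hat Q_t\}$ yields an anytime-valid upper bound $\bar p_t(\delta)$ on $\hat P_0(\hat Q < \hat U_t)$. Combining these, with probability at least $1-\delta$ over $\hat P$,
\begin{align*}
    \E_{\hat P}[\hat U_t - \hat U_{t+1} \mid \hat U_t] \leq \bar \mu(\hat U_t)\,\bar p_t(\delta)
\end{align*}
holds for every $t \in \N$ simultaneously. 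On this event, the termination rule $\bar \mu(\hat U_\tau)\bar p_\tau(\delta) \leq \gamma$ immediately yields $\E_{\hat P}[\hat U_\tau - \hat U_{\tau+1} \mid \hat U_\tau] \leq \gamma$. Finiteness of $\tau$ follows as in \cref{prop:infinite} from $\bar \mu(\cdot) \leq 1$ and $\bar p_t(\delta) \to 0$.

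To obtain the theorem's guarantee on the \emph{true} marginal gain, I would decompose
\begin{align*}
    \E_{\bbP}[U_\tau - U_{\tau+1} \mid \hat U_\tau]
    = \E_{\bbP}[\hat U_\tau - \hat U_{\tau+1} \mid \hat U_\tau]
    + \bigl(\E_{\bbP}[U_\tau - \hat U_\tau \mid \hat U_\tau] - \E_{\bbP}[U_{\tau+1} - \hat U_{\tau+1} \mid \hat U_\tau]\bigr)
\end{align*}
and invoke \cref{assm:regression} to control the parenthesized difference of selection biases, so that the true marginal gain is dominated by the empirical one. Combined with the previous display, this yields $\E_{\bbP}[U_\tau - U_{\tau+1} \mid \hat U_\tau] \leq \gamma$ on the same $(1-\delta)$-probability event, proving \cref{eq:thm_finite}. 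The main obstacle is ensuring both ingredients remain valid at the random index $\tau$: the anytime-valid form of \cref{thm:always-valid-min} is exactly what allows the bound in the first step to be applied at a data-dependent stopping time, while \cref{assm:regression} holds for every deterministic $t$ and $\tau$ is measurable with respect to $\{\hat U_t\}$, so the second step can be invoked pathwise at the realized value of $\tau$.
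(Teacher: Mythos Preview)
Your proposal is correct and follows essentially the same approach as the paper's proof: apply the infinite-data result (\cref{prop:infinite}, equivalently \cref{thm:generic}) to the empirical sequence $\{\hat Q_t\}$ to obtain $\E[\hat U_\tau - \hat U_{\tau+1}\mid \hat U_\tau]\le\gamma$ with probability at least $1-\delta$, then use the additive decomposition and \cref{assm:regression} to pass from the empirical to the true marginal gain. The paper presents the decomposition first and the empirical bound second, but the logical content is identical.
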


\subsection{Data-driven anytime-valid upper bounds on the conditional expected improvement.} \label{sub:est-mrl}

    We conclude this section with an analysis of the case in which $\bar \mu$ can be estimated from data.
    In particular, we analyze the case in which the following assumption, based on (\ref{as:exp}), is satisfied:
    \begin{assumption}[Non-decreasing CEI] \label{assm:mrl}
        For all constants $a, a'  \in \supp{\hat P_0}$ such that $a > a'$ and $a, a'< \median(\hat P_0)$, it holds with probability $1$ that
        \begin{align}
            {\E_{\hat P_0} [a - \hat Q_{t} \; | \; \hat Q_{t} < a] \geq \E_{\hat P_0}[a' - \hat Q_{t} \; | \; \hat Q_{t} < a'].} \label{eq:mrlassm}
        \end{align}
        simultaneously for all $t=1,2,\dots$.
    \end{assumption}

    In other words, better models have lower conditional expected improvement.
    The farther out into the tail of a distribution a model is, the smaller we expect future improvements to be.
    %
    %

    \Cref{assm:mrl} implies that the CEI at the left tail can be bounded by the CEI near the median.
    To make use of this assumption, we must develop a high-probability anytime-valid upper bound for this CEI.
    %
    %
    We formalize this in \Cref{alg:av3}.

    %
    At a high level, our approach is to first pick a quantile $q^{\mathrm{target}}$  of the distribution to estimate, and then estimate the CEI at a high probability lower bound on that quantile.
    Under \Cref{assm:mrl}, the CEI at this estimated quantile bounds the CEI in the tail of the distribution, giving us our high-probability bound $\bar \mu$.
    %
    %
    Unlike in \Cref{sub:infinite}, $\bar \mu$ is a high-probability upper bound and does not hold with probability 1.
    Thus, we must take a union bound to ensure that overall our guarantee holds with probability at least $1 - \delta$.

    The choice of $q^{\mathrm{target}}$ is arbitrary as long as it is below the median, but it must balance two factors.
    First, it must not be too close to zero. If it is, we will not have much data on which to estimate $\bar \mu$ and will thus have a loose upper bound (i.e., $\bar \mu$ will be large because it is estimated from a small amount of data).
    Second, the smaller $q^{\mathrm{target}}$ is, the smaller the CEI (based on \Cref{assm:regression}). That is, we will be estimating a smaller value of $E_{\hat P_0}[a - \hat Q_t \; | \; \hat Q_t < a]$ where $a$ is the $q^{\mathrm{target}}$-quantile of $\hat P_0$.
    This is because, by \Cref{assm:regression}, $E_{\hat P_0}[a - \hat Q_t \; | \; \hat Q_t < a]$ is non-decreasing in $a$.
    Thus, the first factor is about the difference between a high probability lower bound and the true quantity, and the second is about the magnitude of the true quantity.
    We'd like the combination of these two to be as small as possible.
    We leave exploration of the optimal choice of the quantile for future work.

    \begin{algorithm}[H]
    \caption{LDA Search with Adaptive Stopping and CEI Estimation}
    \begin{algorithmic}[1]
    \REQUIRE Stopping threshold $\gamma$ and failure probability $\delta$.
    \STATE Let $T_1 = \ceil{18 \log (3 / \delta))}$ and draw $T_1$ samples $\{ \hat Q_s \}_{s=1}^{T_1}$. 
    \STATE Compute the empirical quantile at level $1/3$:
        $$C \defeq \hat Q_{(\floor{T_1 / 3})}.$$
    \FOR{$t = 1, 2, \dots$}
        \STATE {Draw a new sample $\hat Q_t$ and compute $\hat U_t = \min_{s \leq t} \hat Q_s$.}
        \STATE Let $\bar p_t$ be defined as in \Cref{thm:always-valid-min}. Also, define
        \begin{align*}
            & {\Delta_t \defeq C - \hat Q_t} \\
            &{S_t \defeq \{i \leq t \; : \;  \indic{\Delta_i > 0} \}}\\
            &\bar \mu_t \defeq \begin{cases}
                \mueb_t(\{ \Delta_s \}_{s=1}^t, \delta/3, S_t) 
                & \text{if } {S_t} \neq \varnothing\\
                \hat U_t & \text{otherwise}
            \end{cases} 
        \end{align*}
        where $\mueb$ is defined as in \Cref{cor:mueb}.
        \IF{$\bar \mu_t \cdot \bar p_t (\delta / 3) < \gamma$} 
            \RETURN {$\hat U_t$}
        \ENDIF
    \ENDFOR
    \end{algorithmic}
    \label{alg:av3}
    \end{algorithm}

We are now ready to state our theorem.

\begin{restatable}{theorem}{estmrl} \label{thm:est-mrl}
    Under \Cref{assm:mrl,assm:regression}, for all $\gamma > 0$ and $\delta > 0$, \Cref{alg:av3} run with $\gamma$ and $\delta$ terminates at a time $\tau \in \N$ such that
    \begin{align}
        \bbP(\E_{\bbP} [U_{\tau} - U_{\tau + 1} \given \hat U_\tau] \leq \gamma) \geq 1 - \delta. \label{eq:est-mrl}
    \end{align}
\end{restatable}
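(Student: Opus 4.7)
The plan is to extend the proof of \Cref{thm:finite} by introducing and union-bounding over three separate failure events, each controlled at probability $\delta/3$. These correspond to: (i) the initial empirical quantile $C$ failing to lie weakly below $m \defeq \median(\hat P_0)$; (ii) the anytime-valid empirical Bernstein bound $\mueb$ failing to upper bound the conditional expected improvement at $C$ at some time $t$; and (iii) the anytime-valid improvement probability bound $\bar p_t$ from \Cref{thm:always-valid-min} failing at some $t$. On the intersection of the complements of these three events, the stopping criterion $\bar \mu_\tau \cdot \bar p_\tau(\delta/3) < \gamma$ will immediately yield a bound on the expected marginal improvement in $\hat U$, and \Cref{assm:regression} will convert this into the desired bound on $\E_\bbP[U_\tau - U_{\tau+1} \mid \hat U_\tau]$ exactly as in the proof of \Cref{thm:finite}.

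For event (i), I would apply a Chernoff bound to the number of the $T_1$ initial samples lying weakly below $m$. Each such sample falls below $m$ with probability at least $1/2$, so standard concentration together with $T_1 \ge 18 \log(3/\delta)$ ensures that the empirical $1/3$-quantile $C$ lies weakly below $m$ with probability at least $1 - \delta/3$, placing $C$ in the regime where \Cref{assm:mrl} applies. For event (iii), \Cref{thm:always-valid-min} invoked at level $\delta/3$ directly gives $\bbP(\exists t : p(\hat U_t) > \bar p_t(\delta/3)) \le \delta/3$. For event (ii), I would condition on $C$ (which depends only on the first $T_1$ samples); the loop draws $\hat Q_1, \hat Q_2, \dots$ are then fresh iid samples from $\hat P_0$, and \Cref{cor:mueb} applied to the bounded summands indexed by the (random) set $S_t$ yields a uniform-in-$t$ high-probability upper bound $\mueb_t$ on $\mu(C) = \E_{\hat P_0}[C - \hat Q \mid \hat Q < C]$ at level $\delta/3$.

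It remains to translate these bounds into the stopping guarantee. On the good event: whenever $S_\tau \neq \varnothing$ we have $\hat U_\tau < C \le m$, so \Cref{assm:mrl} gives $\mu(\hat U_\tau) \le \mu(C) \le \bar \mu_\tau$; and whenever $S_\tau = \varnothing$, the algorithm uses $\bar \mu_\tau = \hat U_\tau$, which is the trivial universal bound on $\mu(\hat U_\tau)$. In either case $g(\hat U_\tau) = \mu(\hat U_\tau) \cdot p(\hat U_\tau) \le \bar \mu_\tau \cdot \bar p_\tau(\delta/3) < \gamma$, which upper bounds $\E_\bbP[\hat U_\tau - \hat U_{\tau+1} \mid \hat U_\tau]$; invoking \Cref{assm:regression} as in \Cref{thm:finite} then yields \eqref{eq:est-mrl}. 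Finite termination follows because $\bar \mu_t \le 1$ almost surely and $\bar p_t(\delta/3) \to 0$. The main obstacle I expect is event (ii): we need anytime validity for the empirical Bernstein bound despite the fact that the summands are restricted to the data-adaptive index set $S_t$, and the conditioning event $\hat Q < C$ must be handled carefully---likely by working with the unconditional quantity $\E_{\hat P_0}[\max(0, C - \hat Q)]$ (which can be estimated over all loop draws) and relating it to the CEI via a confidence bound on $P_0(\hat Q < C)$, or by exploiting the specific martingale structure inside \Cref{cor:mueb} so that it applies directly to the restricted subsample.
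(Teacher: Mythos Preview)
Your proposal is correct and follows essentially the same route as the paper: the paper likewise defines three good events $\cE_0=\{C\le\median(\hat P_0)\}$, $\cE_1=\{\mu(C)\le\bar\mu_\tau\}$, $\cE_2=\{p(\hat U_\tau)\le\bar p_\tau(\delta/3)\}$, bounds each failure at $\delta/3$ via (respectively) a Hoeffding argument on the $T_1$ initial samples, \Cref{cor:mueb} applied to the restricted subsample $S_t$ (your second suggested resolution of the ``main obstacle''), and \Cref{thm:always-valid-min}, and then combines them exactly as you describe. Your explicit treatment of the $S_\tau=\varnothing$ case and of the passage from $\hat U$ to $U$ via \Cref{assm:regression} is in fact more detailed than the paper's write-up, which leaves both points implicit.
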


\section{Empirical Analysis} \label{sec:empirical}

In this section, we evaluate our method on several datasets and machine learning methods. 
The datasets we use are Adult \citep{adult_2}, Folktables \citep{ding2021retiring}, and HMDA \citep{hmda}.
The first of these is a lending prediction task, the second is an employment prediction task and the third is a mortgage prediction task.
The methods we use are logistic regression, random forests and gradient boosted trees.
In \Cref{sec:further-empirical}, we also evaluate how well our approach composes with fairness-aware machine learning methods like Fairlearn \citet{weerts2023fairlearn}.
Full details of our data cleaning and feature/outcome selection are available in \Cref{sec:further-empirical} and in our code at \texttt{ \href{https://github.com/johnchrishays/lda}{https://github.com/johnchrishays/lda}}.

\paragraph{Evaluation procedure.} To evaluate our algorithm, we would ideally compare its performance against the full-information regime discussed in \Cref{sub:fullinfo}, where we perfectly observe the marginal benefit of sampling a new model.
This is in general not possible, since we know neither the true data distribution nor the true distribution of model disparate impacts.
Instead, we treat the finite dataset as a ``population distribution'' and subsample to produce training and testing datasets.
In particular, to create a population distribution $\cD$, we assign equal probability measure to each point in the dataset. 
To create a dataset $D$, we sample 3000 points iid from this distribution.
In practice, dataset sizes can vary considerably depending on the domain. For a fixed model class, fewer observations imply noisier estimation and a larger set of viable high-performing models, whereas more observations imply a more tightly identified neighborhood around the highest-performing models. Dataset size also affects training costs and can therefore be reflected in $\gamma$, the marginal cost of training an additional model.
%

Just as we cannot observe the population data distribution, we similarly cannot observe the population model training distribution.
That is, we cannot exactly compute probabilities or expectations with respect to $\cA(D)$, since model training processes are typically constituted by a series of possibly complex and opaque operations, and therefore 
do not lend themselves to closed form computations.
Thus, we use a similar technique to generate (and eventually subsample from) a large pool of $B = 5000$ trained models.
To generate the pool, each model $h_b, b \in [B]$ is trained on a partition $\Dtrain_b \subset D$ and then evaluated on the other partition $\Dtest_b \subset D$. This generates $\hat Q_b = \hat Q(h_t; \Dtest_b)$. 
Then, the population disparate impact $Q_b$ is computed by evaluating disparate impact over the whole population distribution (defined by the complete, non-subsampled dataset we started with).
This set of sample and population disparate impacts $\{ \hat Q_b, Q_b \}_{b \in [B]}$ is used to define $\bbP_0$.
We similarly record the sample and population accuracy of models trained.

To evaluate our algorithm, in each iteration we sample iid from $\bbP_0$ and run the algorithms described in \Cref{sec:theory}.
For a given $\bbP_0$, we run the algorithm many times (sampling from $\bbP_0$ up to a maximum of $T$ times for each run of the algorithm) and report results over all runs.
To compare these results against a ground truth, we can compute the expected marginal gain as in the full-information regime defined in \Cref{eq:stopping-lhs-simplified}, for given draws from $\bbP_0$.
All together, this procedure gives us a setup in which all true distributional quantities are known, allowing us to compare our method with a (semi-synthetic) ground truth.

Finally, in all of our results, to smooth out idiosyncrasies due to the realization of the sub-sampled dataset $D$, we repeat the whole procedure 45 times, generating a new dataset $D$ each repetition.
We note that all of the above setup is to facilitate a point of comparison (a ground truth against which to evaluate our algorithm). 
None of this setup would be necessary to deployment of our procedure, where any ground truth distributions are typically unknowable.
Further details on our data preparation, model training and comparison to the full-information regime are available in \Cref{sec:further-empirical}.

With the compute resources we used to train the models for this paper we estimate the compute costs as follows:
(Recall, for each population dataset and model combination, we train $B=5000$ models for each of $45$ dataset subsamples. This results in a total of about 4 million models trained in the final versions of our experiments.)
The total CPU-time for our final set of models trained was 137 days and 17 hours (a little more than three days in wall-clock time with parallelism), on machines with 515G of memory and 2.2GHz CPU frequency.
Servers with similar resources can be rented from AWS for about \$0.05 an hour per CPU at the time of this writing, which would result in a total training cost of about \$165, resulting in cost per model trained of approximately \$0.00004.
Of course, datasets of different size, different model classes, different hardware or different CPU-hour costs would yield different model training times and therefore different costs per model trained.

\paragraph{Analysis of model performance.} A precondition for there to be any benefit to model retraining is variation in disparate impact.
In \Cref{fig:acc_vs_srg}, we plot the population accuracy (vertical axis) and population disparate impact (horizontal axis) of our procedure across all models trained.
Brighter colors indicate higher densities of models, on a logarithmic scale.
Panel rows are machine learning methods and panel columns are datasets.

Observe that there is significant variation in disparate impact among models, in the sense that there is frequently about a $20\%$ spread between the highest and lowest disparate impact observed.
By contrast, for almost all model/dataset combinations, there is less than a five percent spread between the best and worst models trained.
The means and standard deviations of population disparate impact and accuracy is recorded in \Cref{fig:srg,fig:acc}, respectively.
Because of the relatively large variation in disparate impact and small variation in accuracy, there is not typically a large fairness/accuracy tradeoff within our data: finding the lowest disparate impact model would result in minimal accuracy compromises.
The greatest exception to this is random forests trained on HMDA, where taking the absolute lowest disparate impact model might yield about a 5\% decrease in accuracy. 

\paragraph{Analysis of \Cref{alg:infinite}.}
To evaluate \Cref{alg:infinite} in the finite-data regime, we track (1) the true marginal benefit of a continued search, and (2) the upper bound given by the algorithm. For a fixed $\gamma$, we are primarily interested in our algorithm's stopping time $\tau$, particularly as it compares to the time at which the true, full-information marginal gain drops below $\gamma$.
%

\begin{figure}
    \centering
    \includegraphics[width=\linewidth]{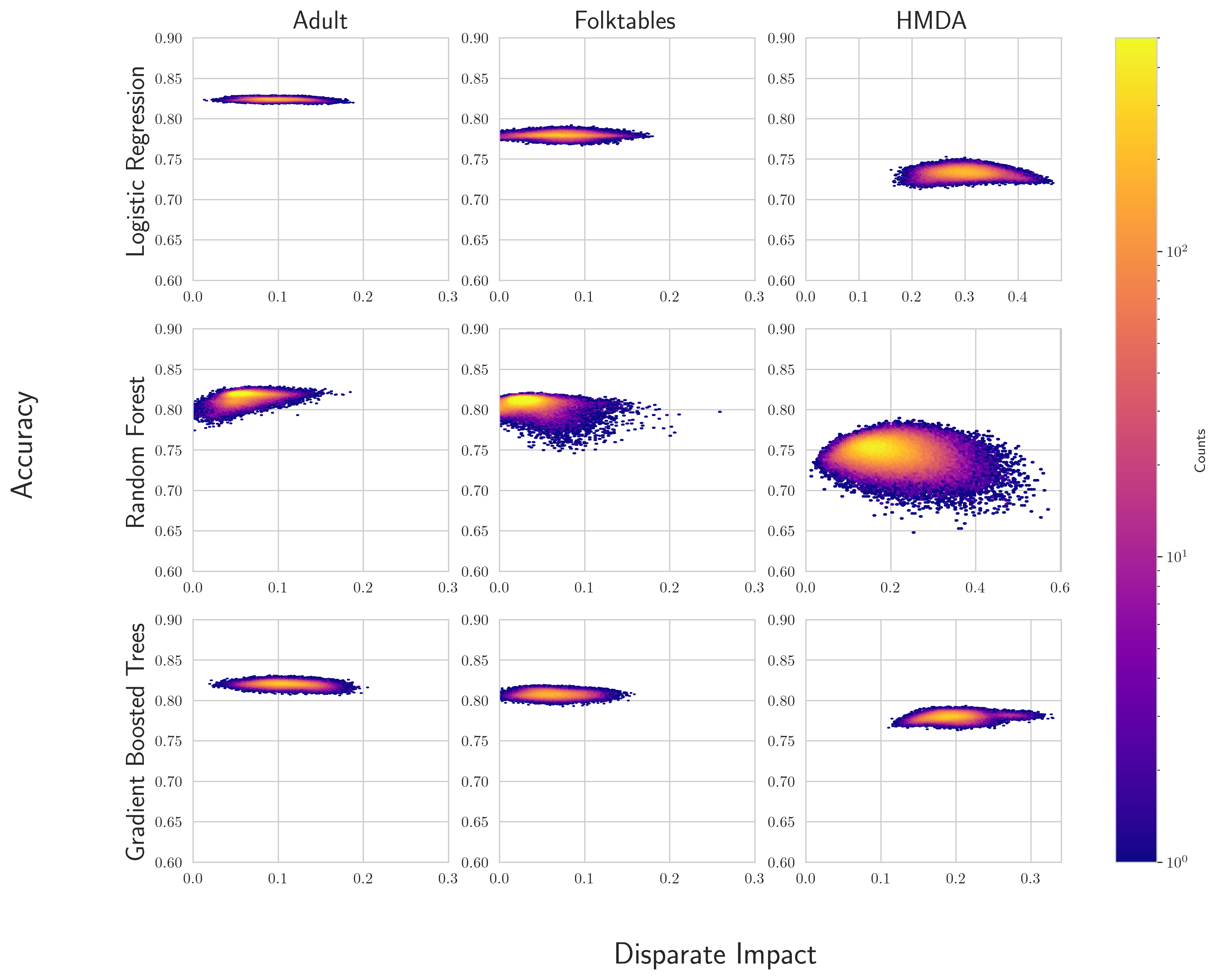}
    \caption{Heatmap of accuracy versus selection rate gap for each dataset and model class. Colors on the plot represent densities of trained models. Each model is trained on a small subset of the data and evaluated on the full population, so accuracy and disparate impact values are population quantities.
    For many datasets and methods, there is significantly more variation in disparate impact than there is in accuracy: the clusters of models spread horizontally more than they spread vertically. The fact that there are models with significant variation in disparate impact but similar accuracy supports the idea that simply retraining models can yield non-trivially less discriminatory algorithms.}
    \label{fig:acc_vs_srg}
    \vspace{-0.1cm}
\end{figure}

\begin{figure}
    \centering
    \includegraphics[width=\linewidth]{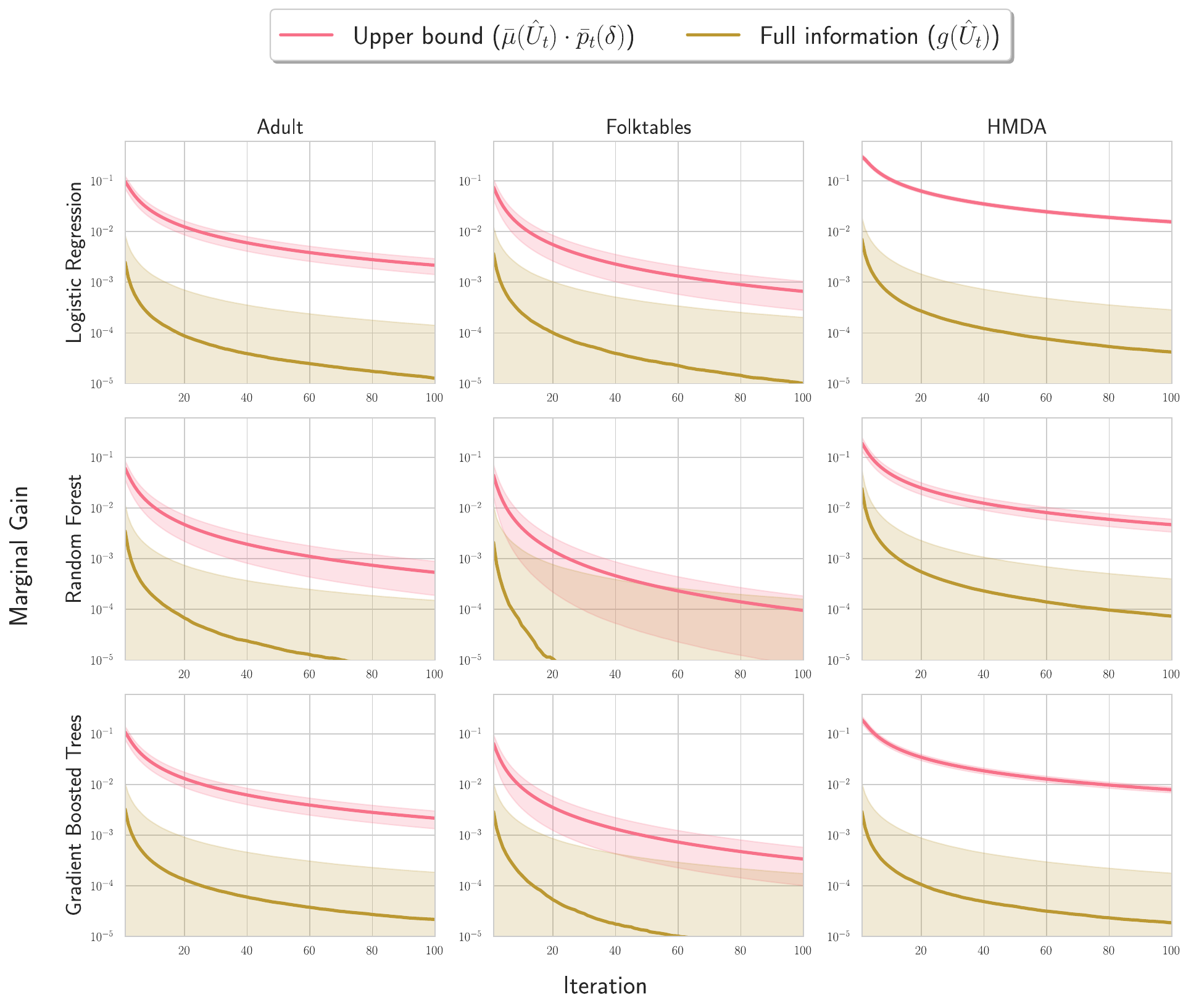}
    \caption{\Cref{alg:infinite} run on several datasets and models. Panel rows are ML methods and panel columns are datasets. In each panel, the horizontal axis is the iteration of the algorithm and the vertical axis is marginal gain. The pink line is our estimated upper bound $\bar \mu(\hat U_t) \bar p_t(0.05)$ and the brown line is the full-information marginal gain. For any $\gamma$, \Cref{alg:infinite} would stop when the pink line crosses the horizontal line at $\gamma$. \proposedchange{Note that the vertical axis is on a logarithmic scale, so a shallow downward sloping line reflects a polynomial decrease in the displayed quantity.}}
    \label{fig:cei}
    \vspace{-0.1cm}
\end{figure}

The results of running the algorithm many times for each $\bbP_0$ are visualized in \Cref{fig:cei}.
Iterations of the algorithm are on the horizontal axis and the marginal gain from resampling (using a logarithmic scale) is on the vertical axis.
The pink line is our upper bound $\bar \mu(\hat U_t) \bar p_t(\delta)$ for $\delta = 0.05$, setting $\bar \mu(\hat U_t) = \hat U_t$; we place no assumptions on the distribution $\hat P_0$.
The brown line is the ground truth $g(\hat U_t)$.
The shaded colored regions for each line show standard deviations over multiple runs of the dataset resampling, model training and algorithm. 

For any $\gamma$, \Cref{alg:infinite} would stop when the pink line drops below the horizontal line at $\gamma$.
Given full distributional information, a model trainer should stop the retraining process once the brown line drops below the horizontal line at $\gamma$.
Thus, for any fixed $\gamma$, the average number of iterations that the algorithm trained models past the stopping time given full information is the horizontal distance between the brown and pink lines.
Empirically, \Cref{alg:infinite} performs well in the sense that it ``overshoots'' the correct stopping time by tens of models in general, though it appears to perform worse for logistic regression on all datasets and all methods on HMDA.
Further assumptions (i.e.,~\ref{as:monotone}, \ref{as:exp}, \ref{as:bounded}) will likely yield tighter bounds.
\proposedchange{We note that each vertical axis is on a logarithmic scale. This means that even a shallow downward slope among the upper bound on marginal gain or full information reflects a polynomial decrease towards 0.}

We also implemented \Cref{alg:av3} in the same setting. The corresponding visualization of the results is displayed in \Cref{fig:cei_mrl}. 
We find that for these datasets and at this scale, the bounds from \Cref{alg:av3} are not better than those without the additional assumptions.
This is because we must account for uncertainty in estimating the conditional expected improvement, for which we must commensurately reduce our confidence budget when estimating the improvement probability.
The benefits from a tighter estimate of the CEI are not greater than the costs of a more conservative estimate of the improvement probability, so the overall bounds are slightly worse than when applying the trivial CEI bound.
The same may not necessarily be true for other contexts or datasets.

To supplement our results on popular machine learning methods, we run our algorithm on a popular fair machine learning framework Fairlearn \citep{weerts2023fairlearn}.
An advantage of Fairlearn is that it provides a wrapper around popular ML methods implemented in Python, allowing head-to-head comparisons between the Fairlearn and vanilla versions of a given method (i.e., logistic regression, random forest or gradient boosted tree).
We display the results in \Cref{fig:cei_fairlearn}.
The results for the Fairlearn versions of methods are comparable to those of the vanilla versions.

In practice, data and ML pipelines in industry for credit, housing and employment prediction may vary substantially among themselves and in comparison to our setup. 
For example, in typical credit model development processes, datasets may consist of hundreds of thousands to millions of observations. Each observation may consist of rich credit history data from credit bureaus and proprietary account history data, yielding tens to thousands of features.
See, e.g., \citet{FinRegLab_2023_Advancing} for an overview and demonstration of industry practices.
Since our models are trained on a much smaller amount of data, only publicly available data and fewer and perhaps less informative features, our empirical results may not match those that would be observed in typical industry contexts.
For example, we would expect that training on a larger sample would yield lower variance in the performance of models, perhaps leading to smaller marginal returns to retraining. Moreover, larger datasets are more expensive to train, yielding higher $\gamma$ in these larger data settings.
(For example, with our computational resources, it would not have been possible to create plots representing the distribution of model performance like those above on much larger datasets than the ones we used.)
Thus, our results should be taken as illustrative of our method but not necessarily representative of results that would be obtained in industry ML model development contexts.


\section{Discussion} \label{sec:conclusion}

Although recent work has proposed that firms should take steps to proactively search for less discriminatory algorithms, there are a number of open questions regarding both the gains to be expected from an LDA search and the resources required to conduct one.
In this paper we take one step towards developing the tooling firms would need to conduct a search. We put forward a method that allows firms to adaptively sample models that come from a particular loss distribution.
Our algorithm adaptively bounds the marginal gains of a continued search, allowing a firm to terminate the search when the gains are small and provide evidence that their search was sufficient.

We take as given the developer's cost of training models relative to their value of reducing disparate impact.
While determining how a firm might determine this cost is beyond the scope of this work, it is the subject of ongoing debate \citep{PaceAnalytics2022_SixUnansweredFairLendingQuestions,black_less_2024}. Case law suggests that courts may reject less discriminatory alternatives that impose too great a cost on defendants, but that they are also willing to endorse alternatives that are far from costless \cite{black_less_2024}. Our framework can help contribute to this debate in at least two ways.
First, because we provide anytime-valid bounds, we do not require that a firm pre-specify a cost.
Instead, model developers and compliance teams can iteratively develop models, consider the incremental gains, run separate experiments, and adaptively decide how to value those gains relative to development costs.
Second, given a search conducted by a firm, our framework allows us to ``back out'' a high-probability upper bound on the firm's estimated cost implied by their decision to stop the search.
That is, by observing a sequence of models sampled by a developer, we can draw conclusions about their implicit value for reducing disparate impact from their decision to terminate a search, and thereby facilitate a more informed debate about the reasonableness of the search.



A number of future directions related to this setting are open.
Our framework could be extended to handle adaptivity, where the performance of previous models informs training decisions for future models.
Extending our methods to account for adaptivity would likely require new technical tools.
It would also be interesting to explore \textit{label bias} within our framework, where outcomes are only observed in circumstances where the individual receives the positive outcome.
Additionally, in low-data settings, we would expect \textit{shrinkage} or \textit{selection effects} to be salient: the best-performing model in-sample could fare much worse out-of-sample, potentially admitting stronger guarantees.
Finally, our technical framework can be applied to general optimal stopping problems where high-probability guarantees are desirable.
For example, a developer or researcher using an LLM may choose the best of many randomly sampled prompts, and with our algorithm, they can certify that further exploration is unlikely to yield significant gains.
Applying our framework to other settings is a fruitful direction for future work.




\ifcslaw
    \bibliographystyle{iclr2026_conference}
\else
    \bibliographystyle{apalike}
\fi
\bibliography{refs,zotero}

\appendix

\section{Additional details on empirical analysis.} \label{sec:further-empirical}

\paragraph{Dataset preparation.} We use the pre-defined prediction targets, features and protected/reference groups given in the datasets. For Adult, the target is to predict whether income is above \$50k. For Folktables, we used the ACSEmployment task and filtered the data to individuals from Alabama from 2018. The prediction target is whether the individual is employed. For HMDA, we use the cleaned dataset given in \citet{cooper2024arbitrariness} for New York in 2017. The prediction target is whether the home mortgage was originated.
Across all datasets, the reference group is all individuals designated White and the protected group is all individuals not designated White.
The size specifications of our datasets, sub-sampling routines and runs were chosen to produce confident results and demonstrate a plausible approach to implementing the procedure described in this paper. 

\paragraph{Model training procedures.} We use default parameter settings for each of our ML methods, except for the following modifications:
For random forests, we fit ten estimators of depth no more than five. 
To explore how our method works when composed with other approaches for reducing disparate impact in model training, we additionally trained models using the Fairlearn python package \citep{weerts2023fairlearn}.
A convenience of Fairlearn is that it provides a wrapper around many standard Python ML model classes, including the ones we train in this paper.
Thus, we can compare Fairlearn versions of logistic regression, random forests and gradient-boosted trees with vanilla, non-Fairlearn versions.
For the Fairlearn versions of methods, we set the selection rate difference bound to 0.2.
The mean and standard deviation of the selection rate disparities and accuracy for each dataset and model class are in \Cref{fig:srg} and \Cref{fig:acc}, respectively.
After we generate the population and observed disparate impact for each model, we then run each algorithm in the paper by sampling $T$ models iid from the dataset of model performances to generate a model retraining trajectory. We then repeatedly resample to generate many trajectories.

\begin{figure}
    \centering
    \includegraphics[width=\linewidth]{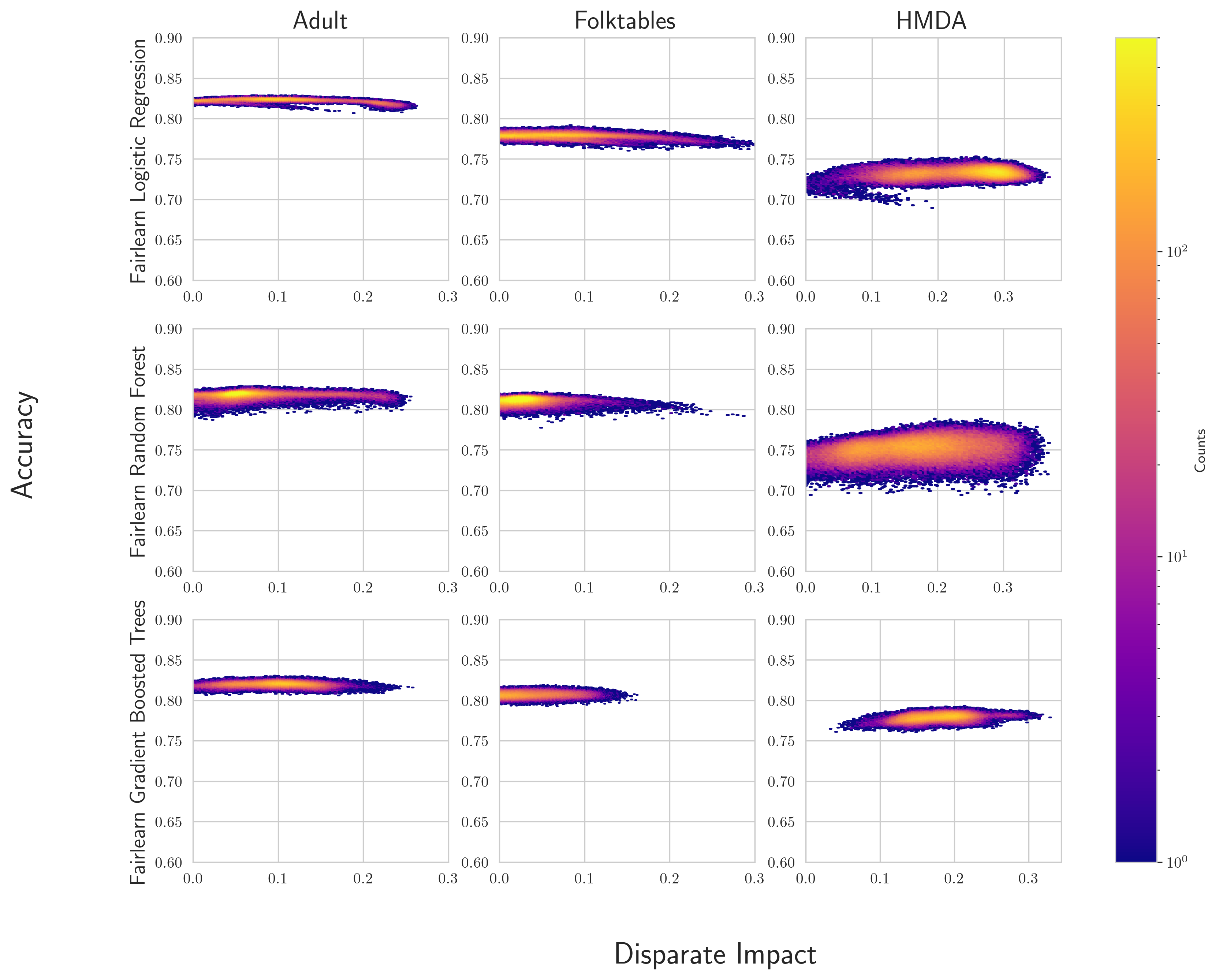}
    \caption{Version of \Cref{fig:acc_vs_srg} using the fairlearn methods. The disparate impact distribution is shifted to the left as a result of the fairness regularization. Like in \Cref{fig:acc_vs_srg}, there is substantial variation in disparate impact over the model re-training distribution.}
    \label{fig:fl_acc_vs_srg}
    \vspace{-0.1cm}
\end{figure}

\begin{figure}
    \centering
    \includegraphics[width=\linewidth]{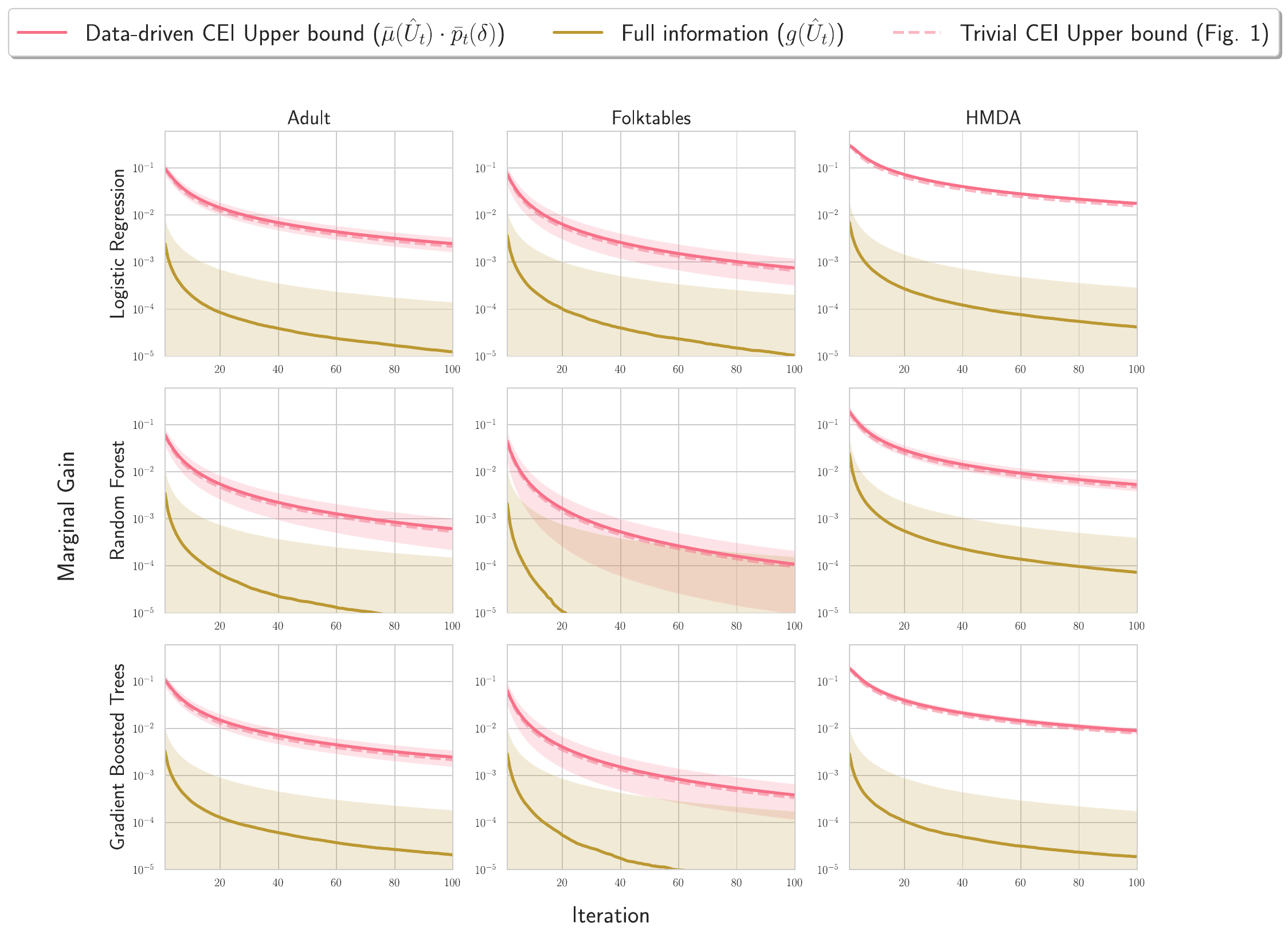}
    \caption{\Cref{alg:av3} run using the same setup as for \Cref{fig:cei}. For comparison, the upper bound from \Cref{fig:cei} is shown as a dashed line. }
    \label{fig:cei_mrl}
    \vspace{-0.1cm}
\end{figure}

\begin{figure}
    \centering
    \includegraphics[width=\linewidth]{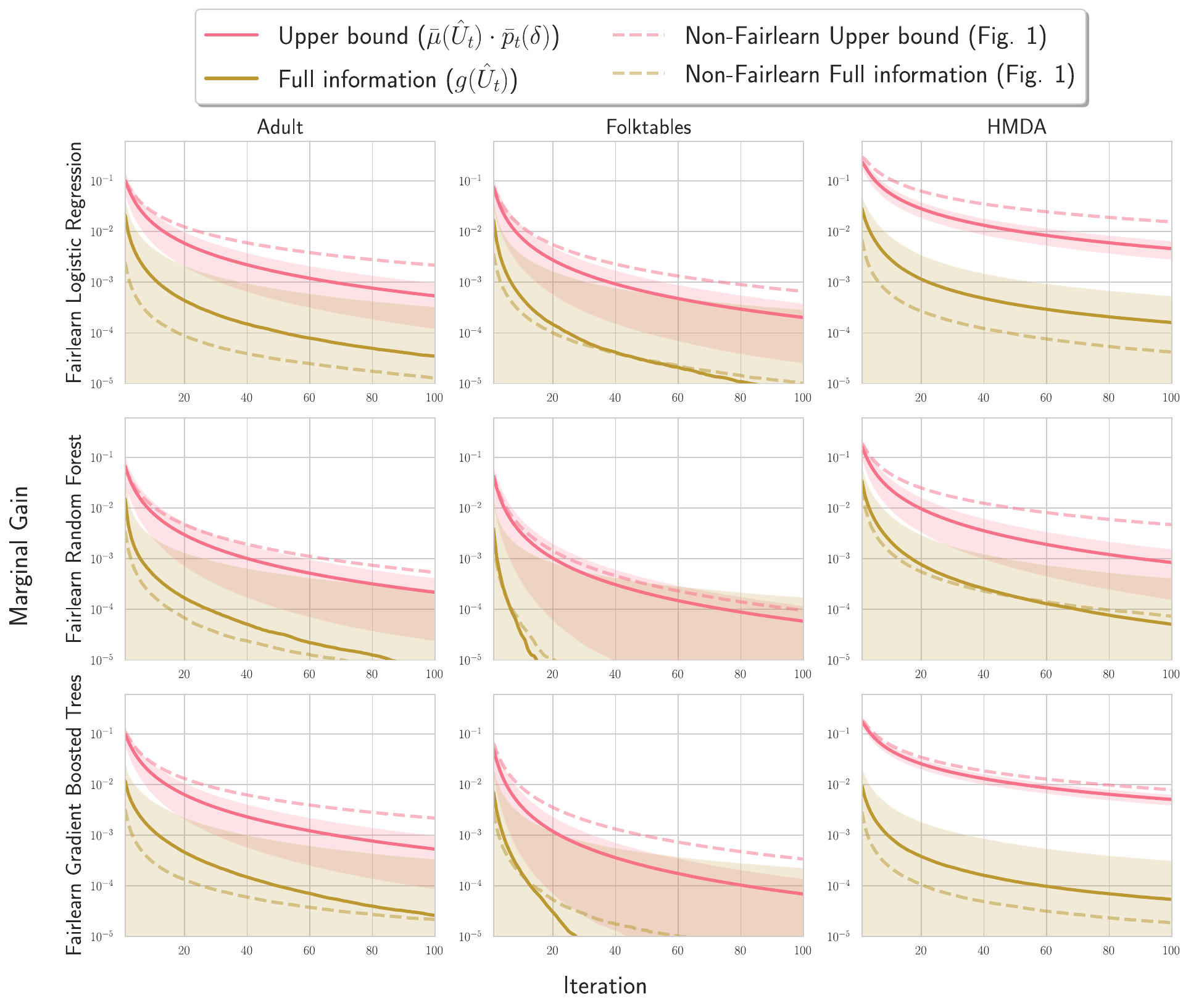}
    \caption{\Cref{alg:infinite} run on fairness-aware methods with the same datasets and base methods as in \Cref{fig:cei}. For comparison, the marginal gain trends from \Cref{fig:cei} are shown as dashed lines.}
    \label{fig:cei_fairlearn}
    \vspace{-0.1cm}
\end{figure}

\begin{figure}
    \centering
\begin{tabularx}{\linewidth}{p{7cm} X X X}
\toprule
Method & Adult & Folktables & HMDA \\
\midrule
Logistic Regression & 0.098 (0.010) & 0.073 (0.015) & 0.300 (0.021) \\
Random Forest & 0.060 (0.013) & 0.032 (0.014) & 0.192 (0.052) \\
Gradient Boosted Trees & 0.109 (0.013) & 0.063 (0.013) & 0.194 (0.013) \\
Fairlearn Logistic Regression & 0.100 (0.046) & 0.069 (0.042) & 0.241 (0.058) \\
Fairlearn Random Forest & 0.066 (0.037) & 0.029 (0.019) & 0.164 (0.067) \\
Fairlearn Gradient Boosted Trees & 0.096 (0.030) & 0.040 (0.025) & 0.177 (0.026) \\
\bottomrule
\end{tabularx}
    \caption{Selection rate disparities for each dataset and model class. Reported number is the mean over all models trained. Standard deviations are in parentheses.}
    \label{fig:srg}
\end{figure}

\begin{figure}
    \centering
\begin{tabularx}{\linewidth}{p{7cm} X X X}
\toprule
 & Adult & Folktables & HMDA \\
\midrule
Logistic Regression & 0.824 (0.044) & 0.780 (0.070) & 0.734 (0.140) \\
Random Forest & 0.818 (0.209) & 0.810 (0.228) & 0.748 (0.545) \\
Gradient Boosted Trees & 0.821 (0.088) & 0.808 (0.096) & 0.780 (0.101) \\
Fairlearn Logistic Regression & 0.824 (0.081) & 0.779 (0.096) & 0.733 (0.161) \\
Fairlearn Random Forest & 0.819 (0.148) & 0.812 (0.134) & 0.751 (0.453) \\
Fairlearn Gradient Boosted Trees & 0.821 (0.091) & 0.807 (0.101) & 0.780 (0.114) \\
\bottomrule
\end{tabularx}
    \caption{Accuracy for each dataset and model class. Reported number is the mean over all runs. Standard deviations are in parentheses.}
    \label{fig:acc}
\end{figure}

\paragraph{Empirical analysis of \Cref{alg:av3}.} We evaluate \Cref{alg:av3} using the same setup as for the analysis of \Cref{alg:infinite}.
An analogue to \Cref{fig:cei} for \Cref{alg:av3} is displayed in \Cref{fig:cei_mrl}.
There are not significant improvements from using the data-adaptive CEI bounds in \Cref{alg:av3} versus the trivial upper bounds in \Cref{alg:infinite}. In fact, the bounds get a small amount larger. This is a result of the fact that the data-adaptive CEI bounds require using some of the confidence budget ($\delta$) to estimate the CEI, which requires shrinking the part of the budget allocated to the improvement probability. This leads to slightly looser bounds on the improvement probability which counteract the perhaps tighter CEI bounds.
    
\paragraph{Empirical analysis of fairness-regularized methods.}
We display a figure analogous to \Cref{fig:cei} for the Fairlearn enabled methods in \Cref{fig:cei_fairlearn}.
We show the marginal gain trained from \Cref{fig:cei} for comparison.
Overall, the upper bounds with Fairlearn are less than with the vanilla methods, while the full information marginal gain is sometimes greater and sometimes lesser on average across our experiments.
The upper bounds may be lower in part because the overall disparate impact is closer to zero.
If the Fairlearn target disparate impact is set to a smaller value, like 0.05, we found in experiments that the marginal gains from retraining are smaller, leading to both the red and brown curves in the marginal gain plots to shift downward.

\paragraph{Miscoverage analysis.}
We next explore how well the coverage guarantees provided by the algorithm hold in practice.
A miscoverage event occurs when our anytime-valid upper bounds are violated: when the expected marginal gain from retraining is greater than the upper bound at some iteration during model retraining.
(We only trained for 100 iterations, so our estimates of miscoverage will be underestimates in circumstances when there would have been a violation after the 100th iteration.)
First, we evaluated miscoverage of the infinite data version of \cref{alg:infinite}, using the same setup as described above.
We detected less than 10 miscoverage events across all model retraining trajectories.
(There were 45 subsampled datasets and 5000 bootstrapped trajectories from our finite dataset of disparate impact performances, so there were $45 \times 5000 = 225000$ trajectories generated for each dataset and method combination.)

Next, we evaluate the finite data version of \Cref{alg:infinite}, where $Q_t$ is not observed exactly.
We find miscoverage significantly above the target 0.05 rate.
Miscoverage rates are displayed in \Cref{fig:miscoverage}.
The rates are particularly high for the Fairlearn methods and for methods trained on the Folktables data.

The explanation for high miscoverage is violations of \Cref{assm:regression} in our semi-synthetic evaluations.
We believe these assumptions violations are artifacts of our semi-synthetic evaluation setup, and that the assumption is likely to hold in typical circumstances when models are sampled from a distribution with infinite support.
In particular, since our population disparate impact distributions are computed as a discrete distribution (by sub-sampling data and training a fixed number of models), the selection effect is highly non-monotonic in some cases.
We show this in \Cref{fig:selectioneffect} for the vanilla methods and \Cref{fig:fl_selectioneffect} for the Fairlearn methods.
On the horizontal axis, we have percentile of $\hat Q_t$, where we sweep over the discrete distribution $\cD$.
On the vertical axis, we have the difference in selection effects, computed as
\begin{align}
    \E_{\mathbb{P}_0}[U_t - \hat U_t - U_{t+1} + \hat U_{t+1} \; | \; \hat U_t]. \label{eq:sectioneffect}
\end{align}
When this expression is greater than 0, for a particular $\hat U_t$, the non-decreasing selection effect holds.
We see that there are many violations of the assumption, and that violations are more frequent among model class and dataset combinations that have higher miscoverage, like the vanilla methods trained on Folktables, especially random forests, and the Fairlearn methods trained on Adult and Folktables.
Violations around smaller percentiles of $\hat Q_t$ are more problematic, since these are more likely to be selected as the minimum at a given iteration $t$, for large enough $t$.
Exploring how to gracefully account for violations of the assumption would be a valuable direction for future work.

In real-world data and model training distributions, we believe \Cref{assm:regression} is likely to hold.
In particular, in our semi-synthetic evaluation, $U_t$ conditional on $\hat U_t$ is typically not a random variable: Each $\hat U_t$ identifies its corresponding $U_t$ by virtue of the fact that a particular value of $\hat U_t$ is often unique in the dataset.
Thus, noise in realizations of $U_t \; | \; \hat U_t$ leads to non-smoothness in \Cref{eq:sectioneffect}.
In real-world settings, $U_t$ should be a true random variable, leading to smoothed versions of \Cref{fig:selectioneffect,fig:fl_selectioneffect} where all densities are shifted towards the higher density regions of the selection effect difference. 
(I.e., the purple regions of \Cref{fig:selectioneffect,fig:fl_selectioneffect} should shift towards the red and orange center.)
Since the red and orange high density regions of each plot are above zero, we'd expect true (non-semisynthetic) evaluations of the ground truth to adhere to \Cref{assm:regression}.
Similarly, there is idiosyncratic non-smoothness in $U_{t+1}$ and $\hat U_{t+1}$ in our semi-synthetic setup that likely wouldn't occur in real applications where the pool of possible models is much larger and where the disparate impact distribution should be much smoother.

\begin{figure}
    \centering
\begin{tabularx}{\linewidth}{p{7cm} X X X}
\toprule
 & Adult & Folktables & HMDA \\
\midrule
Logistic Regression & 0.006 & 0.161 & 0.000 \\
Random Forest & 0.077 & 0.381 & 0.000 \\
Gradient Boosted Trees & 0.006 & 0.196 & 0.000 \\
Fairlearn Logistic Regression & 0.219 & 0.400 & 0.000 \\
Fairlearn Random Forest & 0.273 & 0.484 & 0.045 \\
Fairlearn Gradient Boosted Trees & 0.146 & 0.475 & 0.000 \\
\bottomrule
\end{tabularx}
    \caption{Miscoverage for each dataset and model class. Reported number is the average number of model retraining trajectories where the true expected marginal gain was above the upper bound at some iteration.}
    \label{fig:miscoverage}
\end{figure}

\begin{figure}
    \centering
    \includegraphics[width=\linewidth]{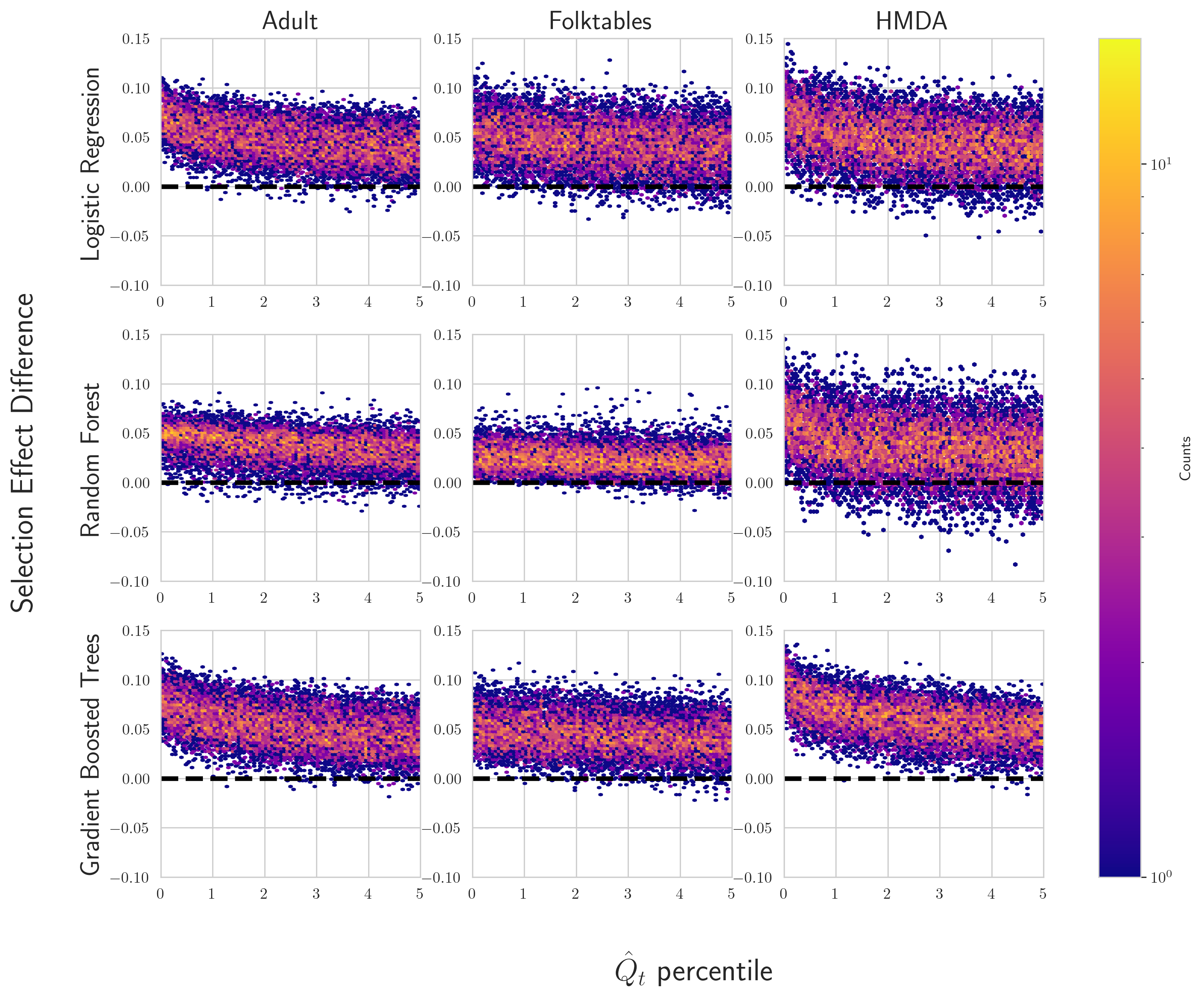}
    \caption{Selection effect difference ($\E_{\mathbb{P}_0}[U_t - \hat U_t - U_{t+1} + \hat U_{t+1} \; | \; \hat U_t]$) versus the percentile of $\hat Q_t$.}
    \label{fig:selectioneffect}
\end{figure}

\begin{figure}
    \centering
    \includegraphics[width=\linewidth]{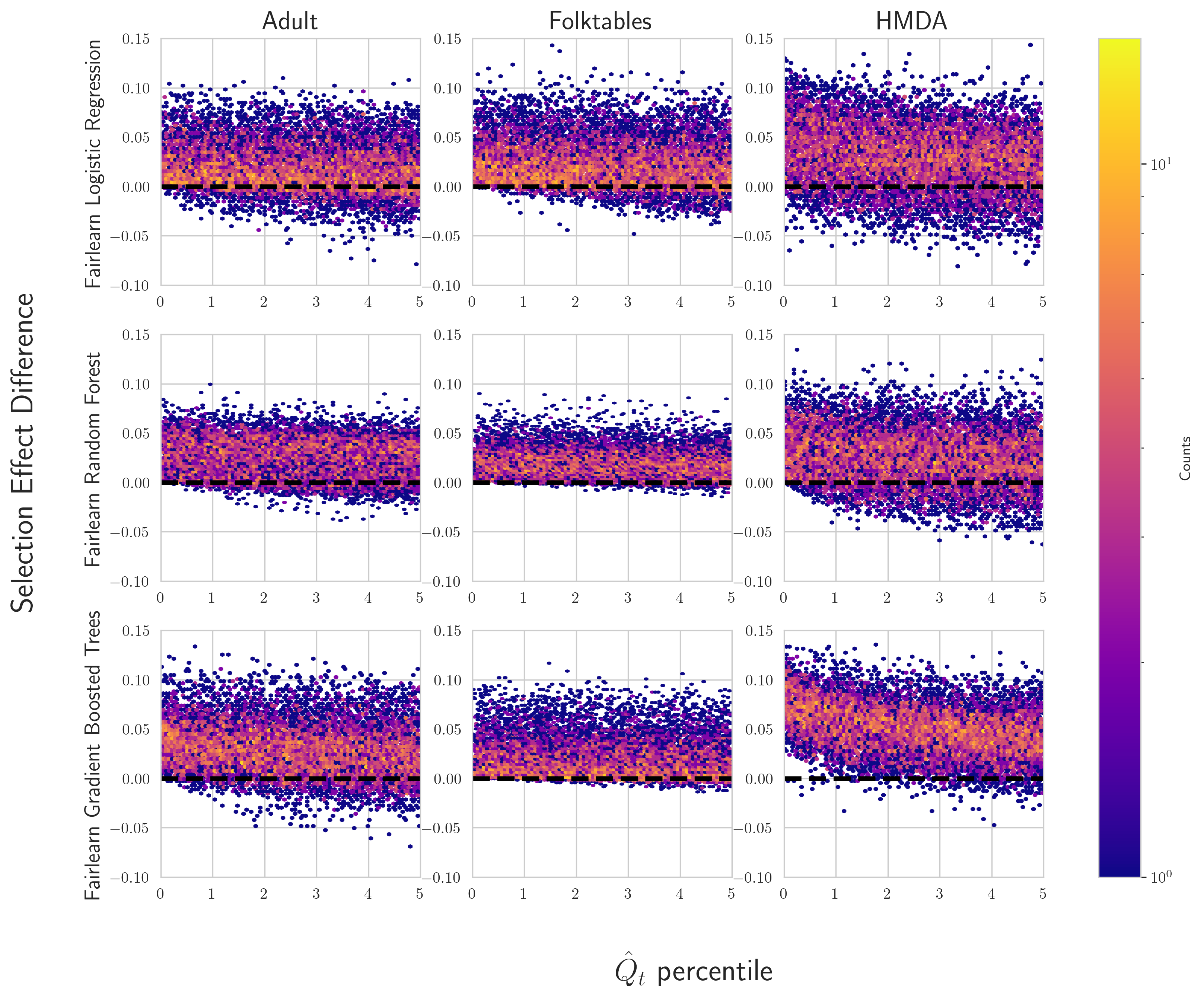}
    \caption{Version of \Cref{fig:selectioneffect} with Fairlearn model classes.}
    \label{fig:fl_selectioneffect}
\end{figure}

\section{Deferred proofs.} \label{sec:proofs}

Results are restated before proofs for reference.
    
\subsection{Deferred proofs for \Cref{sub:infinite}}

    \avmin*

\begin{proof}[Proof of \Cref{thm:always-valid-min}]
    At a high level, we proceed as follows:
    \begin{enumerate}
        \item First, show that it suffices to consider the case where the $X_t$ are uniform on $[0, 1]$ via the probability integral transform.
        \item Then, we show that it suffices to provide an anytime-valid upper bound on the running minimum of the sequence.
        \item Finally, we show that $\bar p_t$ as defined above yields such a bound.
    \end{enumerate}
    We begin by using the probability integral transform to ``convert'' our $X_t$'s into uniform random variables.
    Let $F_X$ be the CDF of $X_t$. Define
    \begin{align*}
        F_X^{-1}(u)
         = \inf\{x : F_X(x) \ge u\}.
    \end{align*}
    Let $\{U_t\}_{t=1}^\infty$ be iid uniform random variables on $[0, 1]$, defined on $\cP$.
    Let $V_t = \min_{s \in [t]} U_s$.
    Then, it holds, by e.g. Ch. 6, Theorem 3.1 of \citet{shorack2000probability}, that $$\{F_X^{-1}(U_t), F_X^{-1}(V_t)\}_{t=1}^\infty \stackrel{d}{=} \{X_t, Y_t\}_{t=1}^{\infty}.$$
    
    Because $F_X$ is monotone, $F_X^{-1}$ is monotone as well.
    Therefore,
    \begin{align*}
        &\cP_0[X_{t+1} < Y_t \given \{X_s\}_{s=1}^t] > \bar p_t(\alpha)
        \\&=
        \cP_0[X_{t+1} < Y_t \given Y_t] > \bar p_t(\alpha)
        \\
        &=
        \cP_0[F_X^{-1}(U_{t+1}) < F_X^{-1}(V_t) \given F_X^{-1}(V_t)] > \bar p_t(\alpha)
        \tag{$\{F_X^{-1}(U_t), F_X^{-1}(V_t)\}_{t=1}^\infty \stackrel{d}{=} \{X_t, Y_t\}_{t=1}^{\infty}$} \\
        &=
        \cP_0[F_X^{-1}(U_{t+1}) < F_X^{-1}(V_t) \given V_t] > \bar p_t(\alpha)
        \tag{$U_{t+1} \bot \{V_t\}$; $F_X^{-1}(V_t)$ is measurable with respect to $\sigma(F_X^{-1}(V_t)) \subseteq \sigma(V_t)$} \\
        &\le
        \cP_0[U_{t+1} < V_t \given V_t] > \bar p_t(\alpha)
        \tag{$F_X^{-1}$ is weakly increasing} \\
        &=
        V_t > \bar p_t(\alpha).
    \end{align*}
    The last inequality follows from the fact that $U_{t+1}$ is uniformly distributed on $[0, 1]$, so the probability it falls below $V_t$ is precisely $V_t$.
    Thus,
    \begin{align*}
        &\cP( \exists t \in \N \; : \; \cP_0(X_{t+1} < Y_t \given \{X_s\}_{s=1}^t) \\
        > &\bar p_t(\alpha)) \leq \cP( \exists t \in \N \; : \; V_t > \bar p_t(\alpha))
    \end{align*}
    Thus, our goal is now to provide an anytime-valid upper-bound on $V_t$.
    
    Define the martingale
    \begin{align}
        M_t(\theta) &\defeq \frac{1}{(1 - \theta)^t} \indic{V_t \geq \theta} \nonumber \\
        &= M_{t-1}(\theta) \cdot \paren{\frac{1}{1-\theta} \indic{U_t \ge \theta}}.
    \end{align}
    This is a martingale because
    \begin{align*}
        &\E[M_t \given M_{t-1}] \\
        &= \Eb{M_{t-1}(\theta) \paren{\frac{1}{1-\theta} \indic{U_t \ge \theta}} \; \bigg| \; M_{t-1}(\theta)} \\
        &= \frac{M_{t-1}(\theta)}{1-\theta} \Eb{\paren{\indic{U_t \ge \theta}} \; \bigg| \; M_{t-1}(\theta)} \\
        &= \frac{M_{t-1}(\theta)}{1-\theta} \Pr[U_t \ge \theta] \\
        &= M_{t-1}(\theta).
    \end{align*}
    Moreover, it is a test martingale because it is nonnegative.
    Next, we use the ``method of mixtures'' \citep[see, e.g.,~][]{robbins1970statistical,waudby2024estimating} to mix $M_t$ with a uniform distribution on $\theta$ over $[0, 1]$.
    Intuitively, placing more mass on smaller values of $\theta$ gives us sharper bounds for larger values of $t$.
    We choose the uniform distribution here for simplicity.
    In \Cref{thm:always-valid-min-iterated-log}, we show how to get an asymptotically tight rate.
    \begin{align*}
        M_t^U(\theta)
        &\triangleq \int_0^1 M_t(\theta) \, d\theta \\
        &= \int_0^1 \frac{1}{(1-\theta)^t} \indic{V_t \ge \theta} \, d\theta \\
        &= \int_0^{V_t} \frac{1}{(1-\theta)^t} \, d\theta.
        \numberthis \label{eq:mix-def}
    \end{align*}
    By Fubini's theorem, this is also a test martingale.
    Applying Ville's inequality (\Cref{thm:ville}), for any $\alpha \in (0, 1)$,
    \begin{align*}
        \cP \paren{\exists t \in \N \; : \; ~ M_t^U(\theta) > \frac{1}{\alpha}}
        &\le \alpha\\
        \cP\paren{\exists t \in \N \; : \; \int_0^{V_t} \frac{1}{(1-\theta)^t} > \frac{1}{\alpha} }
        &\le \alpha.
        \numberthis \label{eq:ville-applied}
    \end{align*}
    Observe that the integrand is nonnegative, so for any sequence $\bar p_t$,
    \begin{equation}
        \int_0^{V_t} \frac{1}{(1-\theta)^t} \, d\theta
        >
        \int_0^{\bar p_t} \frac{1}{(1-\theta)^t} \, d\theta
        \Longleftrightarrow V_t > \bar p_t.
        \label{eq:Yt-pt-iff}
    \end{equation}
    Therefore, we can choose $\bar p_t(\alpha)$ such that
    \begin{equation}
        \int_0^{\bar p_t(\alpha)} \frac{1}{(1-\theta)^t} \, d\theta
        = \frac{1}{\alpha}.
        \label{eq:pt-condition}
    \end{equation}
    For $t = 1$,
    \begin{align*}
        \int_0^{\bar p_1(\alpha)} \frac{1}{1-\theta} \, d\theta
        &= \frac{1}{\alpha} \\
        -\log(1 - \bar p_1(\alpha))
        &= \frac{1}{\alpha} \\
        \bar p_1(\alpha)
        &= 1 - e^{-1/\alpha}.
    \end{align*}
    For $t \ge 2$,
    \begin{align*}
        \int_0^{\bar p_t(\alpha)} \frac{1}{(1-\theta)^t} \, d\theta
        &= \frac{1}{\alpha} \\
        \frac{(1 - \bar p_t(\alpha))^{t-1} - 1}{t-1}
        &= \frac{1}{\alpha} \\
        \bar p_t(\alpha)
        &= 1 - \paren{\frac{t-1}{\alpha} +1}^{-1/(t-1)}
    \end{align*}
    Thus,
    \begin{align*}
        &\cP\paren{\exists t \in \N \; : \; ~ V_t > \bar p_t(\alpha) } \\
        &= \cP\paren{
        \exists t \in \N :  \int_0^{V_t} \frac{1}{(1-\theta)^t} d\theta
        >
        \int_0^{\bar p_t} \frac{1}{(1-\theta)^t}  d\theta
        }
        \tag{by~\cref{eq:Yt-pt-iff}}
        \\
        &= \cP\paren{
        \exists t\in \N \; : \;  ~ \int_0^{V_t} \frac{1}{(1-\theta)^t} \, d\theta
        >
        \frac{1}{\alpha} 
        }
        \tag{by~\cref{eq:pt-condition}}
        \\
        &\le \alpha,
        \tag{by~\cref{eq:ville-applied}}
    \end{align*}
    completing the proof.
\end{proof}

    We first prove a theorem for general iid random variables bounded in $[0, 1]$.
    \begin{theorem}\label{thm:generic}
        For all $\gamma, \delta > 0$ and $\cP$, \Cref{alg:infinite} run with $\cP$, $\gamma,\delta$ and any $ \bar\mu$ satisfying \Cref{def:bounded-mrl} as input terminates at a stopping time $\tau \in \N$ such that
        \begin{align*}
            \cP( \E_{\cP}[X_\tau - X_{\tau + 1} \given X_\tau ] < \gamma ) \geq 1 - \delta.
        \end{align*}
    \end{theorem}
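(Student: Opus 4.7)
The plan is to decompose the conditional expected marginal gain into an expected-improvement term and an improvement-probability term, bound each separately, and combine them with the algorithm's stopping rule. Writing $Y_t \defeq \min_{s \le t} X_s$ for the running minimum produced by the algorithm, I interpret $\E_\cP[X_\tau - X_{\tau+1} \given X_\tau]$ as the expected marginal decrease in this running minimum, paralleling the decomposition used in \Cref{sub:infinite}.

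First, since $Y_{t+1} = \min(Y_t, X_{t+1})$ and $X_{t+1}$ is independent of $Y_t$ by iid sampling, the single-step improvement factors as
\begin{align*}
    \E_\cP\sqparen{Y_t - Y_{t+1} \given Y_t}
    = \E_\cP\sqparen{(Y_t - X_{t+1})\indic{X_{t+1} < Y_t} \given Y_t}
    = \mu(Y_t)\, p(Y_t),
\end{align*}
where I define $\mu(u) \defeq \E_\cP[u - X \given u > X]$ and $p(u) \defeq \cP(u > X)$ for $X \sim \cP$, with the convention $\mu(u) p(u) = 0$ whenever $p(u) = 0$. This is the same decomposition that motivated the algorithm.

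Second, I would combine two bounds on these two factors. The hypothesis that $\bar \mu$ satisfies \Cref{def:bounded-mrl} gives the almost-sure deterministic inequality $\mu(u) \le \bar \mu(u)$ for every $u \in [0,1]$. In parallel, \Cref{thm:always-valid-min} applied with failure parameter $\delta$ yields the anytime-valid probability bound
\begin{align*}
    \cP\paren{\exists t \in \N : p(Y_t) > \bar p_t(\delta)} \le \delta.
\end{align*}
On the complementary event $E$, which has probability at least $1-\delta$, we therefore have, simultaneously for every $t$,
\begin{align*}
    \E_\cP\sqparen{Y_t - Y_{t+1} \given Y_t} = \mu(Y_t)\, p(Y_t) \le \bar \mu(Y_t)\, \bar p_t(\delta).
\end{align*}

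Third, the algorithm halts at the first $\tau$ for which $\bar \mu(Y_\tau)\, \bar p_\tau(\delta) < \gamma$. Evaluating the previous display at $t = \tau$ on the event $E$ yields $\E_\cP[Y_\tau - Y_{\tau+1} \given Y_\tau] < \gamma$, which is the claimed $1-\delta$ guarantee. For finite termination, I would observe a data-independent argument: since $\bar \mu(Y_t) \le 1$ (as $\bar \mu$ maps into $[0,1]$) and $\bar p_t(\delta) \to 0$ as $t \to \infty$ (by the explicit formula in \Cref{thm:always-valid-min}), there is a deterministic $T = T(\gamma, \delta)$ with $\bar p_t(\delta) < \gamma$ for all $t \ge T$, forcing $\tau \le T$ almost surely. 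The only subtle point---and the closest thing to an obstacle---is that the deterministic bound on $\mu$ and the anytime-valid high-probability bound on $p$ combine \emph{without} a union bound: because $\mu \le \bar \mu$ holds with probability one, the full confidence budget $\delta$ stays with the improvement-probability factor from \Cref{thm:always-valid-min}. Everything else is bookkeeping---verifying that the conditional expectation is taken over $X_{t+1}$ given $Y_t$ (using independence) so that $\mu$ and $p$ factor cleanly, and then chaining the inequalities through the stopping rule.
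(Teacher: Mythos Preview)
Your proposal is correct and follows essentially the same route as the paper: decompose the expected marginal gain as $\mu(Y_t)\,p(Y_t)$, use the almost-sure bound $\mu \le \bar\mu$ from \Cref{def:bounded-mrl}, apply \Cref{thm:always-valid-min} to control $p(Y_t)$ anytime-validly at level $\delta$, and combine with the stopping condition. The paper presents the same chain of inequalities in contrapositive form (bounding $\cP(\text{gain} > \gamma)$ directly by $\delta$), and your explicit finite-termination argument via $\bar p_t(\delta)\to 0$ matches what the paper states in the surrounding text.
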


    \begin{proof}[Proof of \Cref{thm:generic}]
        Observe:
        \begin{align*}
            &\cP( \E[X_\tau - X_{\tau + 1} \given U_\tau] > \gamma) \\
            &=
            \cP(g(X_\tau) > \gamma) \\
            &= \cP(\mu(X_\tau) p(X_\tau) > \gamma) \\
            &\le \cP(\mu(X_\tau) p(X_\tau) > \bar \mu(X_\tau) \bar p_\tau(\delta) ) \tag{$\bar \mu(X_\tau) \bar p_\tau(\delta) \le \gamma$ by the stopping condition} \\
            &\le \cP(\bar \mu(X_\tau) p(X_\tau) > \bar \mu(X_\tau) \bar p_\tau(\delta) ) \tag{$\mu(X_\tau) \le \bar \mu(X_\tau)$ almost surely} \\
            &= \cP(p(X_\tau) > \bar p_\tau(\delta) ) \tag{$\bar \mu (u) \geq 0$ for all $u$}  \\
            &\le \cP(\exists t \in \N \; : \; p(X_t) > \bar p_t(\delta) ) \\
            &\le \delta \tag{\Cref{thm:always-valid-min}}
        \end{align*}
    \end{proof}

    \begin{theorem}[Ville's inequality] \label{thm:ville}
        Let $M_1,M_2,\dots$ be a non-negative supermartingale scaled so that $\E M_1 \leq 1$. Then, for any real number $\alpha$, 
        \begin{align*}
            P \paren{\sup_{t \geq 1} M_t \geq \frac{1}{\alpha}} \leq \alpha.
        \end{align*}
    \end{theorem}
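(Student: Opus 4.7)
The plan is to prove Ville's inequality via the classical stopping time argument combined with Markov's inequality, adapted to handle supermartingales rather than martingales. Throughout I assume $\alpha > 0$, since otherwise the bound is trivial.

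First, I would define the hitting time $\tau \defeq \inf\{t \geq 1 : M_t \geq 1/\alpha\}$, with the convention $\inf \varnothing = \infty$. The key observation is that the tail event is exactly the hitting event: $\{\sup_{t \geq 1} M_t \geq 1/\alpha\} = \{\tau < \infty\}$. To avoid having to reason about an unbounded stopping time directly, I would instead analyze $\tau \wedge n$ for fixed $n$ and then pass to the limit.

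Next, I would apply the optional stopping theorem for supermartingales to the bounded stopping time $\tau \wedge n$. Since $M_t$ is a non-negative supermartingale, this yields
\begin{align*}
    \E[M_{\tau \wedge n}] \leq \E[M_1] \leq 1.
\end{align*}
By non-negativity of $M$, we have $\E[M_{\tau \wedge n}] \geq \E[M_{\tau \wedge n} \cdot \indic{\tau \leq n}]$. On the event $\{\tau \leq n\}$, by the definition of $\tau$, $M_{\tau \wedge n} = M_\tau \geq 1/\alpha$, so
\begin{align*}
    \E[M_{\tau \wedge n} \cdot \indic{\tau \leq n}] \geq \frac{1}{\alpha} P(\tau \leq n).
\end{align*}
Chaining these inequalities gives $P(\tau \leq n) \leq \alpha$ for every $n$.

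Finally, since $\{\tau \leq n\}$ is an increasing sequence of events whose union is $\{\tau < \infty\}$, continuity of probability from below yields $P(\tau < \infty) \leq \alpha$, which is the desired bound on $P(\sup_{t \geq 1} M_t \geq 1/\alpha)$. The only real subtlety is invoking optional stopping in the supermartingale case, but since $\tau \wedge n$ is bounded the standard version applies without additional integrability hypotheses, so this step is routine rather than an obstacle. The argument uses nothing specific to the construction of $M_t$ used in \Cref{thm:always-valid-min}; it only relies on the supermartingale property and non-negativity that the mixture $M_t^U$ there satisfies.
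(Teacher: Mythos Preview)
Your argument is correct and is the standard proof of Ville's inequality via optional stopping at a bounded stopping time followed by monotone convergence. Note, however, that the paper does not actually prove \Cref{thm:ville}: it is simply stated as a known result and then invoked in the proofs of \Cref{thm:always-valid-min} and \Cref{thm:always-valid-min-iterated-log}. So there is no ``paper's own proof'' to compare against; your write-up supplies exactly the classical justification the paper omits.
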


\subsection{Deferred proofs of \Cref{sub:finite}}

\finite*

\begin{proof}[Proof of \Cref{thm:finite}.]
    First, observe:
    \begin{align*}
        &\E_{\bbP_0} [U_{\tau} - U_{\tau + 1} \given \hat U_\tau] \\ 
        &= \E_{\bbP_0} [\hat U_{\tau} - \hat U_{\tau + 1} \given \hat U_\tau] \\
        &\quad+ \E_{\bbP_0} [(U_{\tau} - \hat U_\tau) - (U_{\tau + 1} - \hat U_{\tau + 1}) \given \hat U_\tau].
    \end{align*}
    Under \cref{assm:regression}, 
    \begin{align}
        \E_{\bbP_0} [(U_{\tau} - \hat U_\tau) - (U_{\tau + 1} - \hat U_{\tau + 1}) \given \hat U_\tau] \geq 0, \label{eq31}
    \end{align}
    for all $t$ with probability 1. 
    Next, observe 
    \begin{align}
        \E_{\bbP_0} [\hat U_{\tau} - \hat U_{\tau + 1} \given \hat U_\tau] &= \E_{\hat P_0} [\hat U_{\tau} - \hat U_{\tau + 1} \given \hat U_\tau]. \label{eq32}
    \end{align}
    Finally, from \Cref{thm:generic}, we have
    \begin{align}
        \hat P(\E_{\hat P_0} [\hat U_{\tau} - \hat U_{\tau + 1} \given \hat U_\tau] \leq \gamma) \geq 1 - \delta. \label{eq33}
    \end{align}
    Putting it all together, we have 
    \begin{align*}
         &\bbP(\E_{\bbP_0} [U_{\tau} - U_{\tau + 1} \given \hat U_\tau] \leq \gamma) \\ &\geq \bbP(\E_{\bbP_0} [\hat U_{\tau} - \hat U_{\tau + 1} \given \hat U_\tau] \leq \gamma) \tag{\Cref{eq31}} \\
         &= \hat P(\E_{\hat P_0} [\hat U_{\tau} - \hat U_{\tau + 1} \given \hat U_\tau] \leq \gamma) \tag{\Cref{eq32}} \\
         &\geq 1-\delta.\tag{\Cref{eq33}}
    \end{align*}
\end{proof}

\subsection{Deferred proofs for \Cref{sub:est-mrl}}

\estmrl*

\begin{proof}[Proof of \Cref{thm:est-mrl}]
    Define the following events.
    \begin{align*}
        \cE_0 &=\{ C \leq \textnormal{\median}(\hat P_0) \} \\
        \cE_1 &= \{ \E_{\hat P_0} [C - \hat Q_{\tau + 1} \; | \; C > \hat Q_{\tau + 1}, C] \leq \bar \mu_\tau \}  
    \end{align*}
    where $\mu_\tau$ is as defined in \cref{alg:av3}.
    
    Notice that, on $\cE_0$ and $\cE_1$ 
    \begin{align*}
        &\E_{\hat P_0} [\hat U_{\tau} - \hat U_{\tau + 1} \; | \; \hat U_{\tau} > \hat Q_{\tau + 1}]  \\
        &\leq \E_{\hat P_0} [z - \hat Q_{\tau + 1} \; | \; z > \hat Q_{\tau + 1}]  \tag{\cref{assm:mrl}} \\
        \implies &\E_{ \hat P_0} [\hat U_{\tau} - \hat Q_{\tau + 1} \; | \; \hat U_{\tau} > \hat Q_{\tau + 1}] \\
        &\leq \E_{\hat P_0} [C - \hat Q_{\tau + 1} \; | \; C > \hat Q_{\tau + 1}, C] \tag{$C \in [\hat U_\tau, \median(P_0)]$ a.s. on $\cE_0$} \\
        &\leq \bar \mu_\tau. \tag{$\cE_1$}
    \end{align*}
    where $\bar \mu_\tau$ is defined as in \Cref{alg:av3}.
    Also, define $\cE_2 = \{ \hat P_0(\hat U_{\tau} > \hat U_{\tau+1} \given \hat U_\tau ) \leq \bar p_\tau \} $.
    Observe that, on $\cE_2$,
    \begin{align*}
        \hat P_0(\hat U_{\tau} > \hat Q_{\tau+1} \given \hat U_\tau) \leq \bar p_\tau(\delta / 3).
    \end{align*}
    Combining these, we have, by the fact that the algorithm terminated
    \begin{align}
        \bar \mu_\tau \cdot \bar p_\tau(\delta / 3) \leq \gamma.
    \end{align}
    By \Cref{lem:medianestimation,lem:mrl,thm:always-valid-min}, $\cE_0, \cE_1$ and $\cE_2$ each occur with probability at least $1-\delta/3$, so by a union bound, their intersection occurs with probability at least $1-\delta$.
\end{proof}

\begin{lemma} \label{lem:medianestimation}
    For all $\delta$, with probability no less than $1-\delta/3$, 
    \begin{align}
        C \leq \textnormal{\median}(P_0)
    \end{align}
    where $C$ is defined as in \Cref{alg:av3}.
\end{lemma}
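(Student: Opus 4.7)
The plan is to recognize $C$ as the empirical $(1/3)$-quantile of $T_1 = \lceil 18 \log(3/\delta)\rceil$ iid draws from $\hat P_0$ and apply a standard Hoeffding concentration bound on the sum of indicators. The key reformulation: since $C$ is the $\lfloor T_1/3 \rfloor$-th order statistic of $\hat Q_1,\dots,\hat Q_{T_1}$, the bad event $\{C > \median(P_0)\}$ coincides with the event that strictly fewer than $\lfloor T_1 / 3 \rfloor$ of the samples fall at or below $\median(P_0)$. Writing $Z_s := \ind\{\hat Q_s \le \median(P_0)\}$ and $p^* := \hat P_0(\hat Q \le \median(P_0))$, the bad event becomes $\{\sum_{s=1}^{T_1} Z_s < \lfloor T_1/3 \rfloor\}$, where the $Z_s$ are iid Bernoulli$(p^*)$.

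Assuming $p^* \ge 1/2$, Hoeffding's inequality yields
\begin{align*}
    P\!\left(\sum_{s=1}^{T_1} Z_s < T_1/3\right) \le \exp\!\left(-2 T_1 (p^* - 1/3)^2\right) \le \exp(-T_1/18) \le \delta/3,
\end{align*}
using $(p^* - 1/3)^2 \ge 1/36$ and the choice $T_1 \ge 18 \log(3/\delta)$. Since $\lfloor T_1/3 \rfloor \le T_1/3$, this immediately bounds the actual bad event by $\delta/3$, yielding the claim. This also explains why the algorithm specifies exactly $T_1 = \lceil 18 \log(3/\delta)\rceil$: it is the smallest sample size at which the $1/6$-gap between the target quantile level and the median yields the desired failure probability.

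The main obstacle I expect is verifying $p^* \ge 1/2$, i.e., $\median(P_0) \ge \median(\hat P_0)$. This inequality is not an immediate consequence of any assumption stated in the preceding sections: the samples $\hat Q_s$ are drawn from $\hat P_0$, and the natural concentration argument targets quantiles of $\hat P_0$, not $P_0$. I note that the proof of \Cref{thm:est-mrl} defines $\cE_0 = \{C \le \median(\hat P_0)\}$ and invokes this lemma to bound $P(\cE_0^c)$, which strongly suggests that the statement as written contains a typo and that the intended right-hand side is $\median(\hat P_0)$. Under that interpretation, $p^* \ge 1/2$ is immediate from the definition of the median and the Hoeffding step above closes the proof without any further hypotheses. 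If one insists on $\median(P_0)$ as written, the identical concentration step suffices provided one additionally assumes $\hat P_0$ places at least half its mass below $\median(P_0)$ — a mild stochastic-dominance condition that could plausibly be derived from a variant of \Cref{assm:regression} relating the finite-sample and population loss distributions.
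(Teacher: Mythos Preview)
Your proposal is correct and follows the same approach as the paper: rewrite the bad event as a lower-tail event for a sum of iid Bernoulli indicators and apply Hoeffding's inequality using the $1/6$ gap between the $1/3$-quantile level and the median, which is exactly why $T_1 = \lceil 18\log(3/\delta)\rceil$. You are also right about the typo --- the paper's own proof works entirely with $\median(\hat P_0)$, consistent with the definition of $\cE_0$ in the proof of \Cref{thm:est-mrl}.
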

\begin{proof}[Proof of \Cref{lem:medianestimation}]
    Let $\varepsilon = 1/6$ and let $i^* = \floor{T_1/2}$.
    Note that the event $C \leq \median(\hat P_0)$ is the same as the event that $i^* \leq \sum_{t=1}^{T_1} \indic{\hat Q_t \leq \median( \hat P_0)}$, since this implies that there are at least $i^*$ draws of $\hat Q_t$ less than the median.
    Note that $\indic{\hat Q_t \leq \median(\hat P_0)}$ are independent and distributed as Bernoulli random variables with success probability $p$.
    Thus, 
    \begin{align*}
        \hat P(C > \median(\hat P_0)) &= \hat P \paren{i^* > \sum_{t=1}^{T_1} \indic{\hat Q_t \leq \median(\hat P_0)}} \\
        &\leq \exp \paren{- \frac{2 (i^* - (1/2-\varepsilon) T_1)^2}{T_1}} \tag{Hoeffding's inequality} \\
        &\leq \exp \paren{- {2 \varepsilon^2 T_1}}  \tag{Substituting definition of $i^*$.} \\
        &\leq \frac{\delta}{3}  \tag{Substituting definition of $\varepsilon$ and simplifying.}
    \end{align*}
\end{proof}

\begin{lemma} \label{lem:mrl}
    For all $\delta$, with probability at least $1 - \delta/3$, it holds for all $t = 2,3, \dots$ simultaneously that
    \begin{align*}
        \E_{\hat P} [C - \hat Q_{t + 1} \; | \; C > \hat Q_{t + 1}, C] \leq \bar \mu_t
    \end{align*}
    where $\bar \mu_t$ is defined as in \Cref{alg:av3}.
\end{lemma}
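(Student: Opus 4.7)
The plan is to reduce the claim to an application of the anytime-valid empirical upper confidence bound $\mueb$ from \Cref{cor:mueb} on an iid subsequence of the observations. I would work conditional on $C$ throughout, since once $C$ is frozen the target quantity $\mu_C \defeq \E_{\hat P_0}[C - \hat Q \given \hat Q < C]$ is a deterministic scalar and the bookkeeping becomes clean; the unconditional statement then follows by integrating over $C$.

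First, I would set up the iid structure. The pre-loop samples used to compute $C$ are independent of the subsequent main-loop draws $\hat Q_1, \hat Q_2, \dots$, so conditional on $C$ the differences $\Delta_t = C - \hat Q_t$ are iid with law equal to that of $C - \hat Q$ under $\hat P_0$. The event $\{\Delta_t > 0\}$ is iid Bernoulli with success probability $\hat P_0(\hat Q < C \given C)$, and the subsequence of positive $\Delta_t$'s, listed in order, forms an iid sequence $Z_1, Z_2, \dots$ drawn from the conditional law of $(C - \hat Q) \given \hat Q < C$, which is supported in $[0, C] \subseteq [0, 1]$ and has mean $\mu_C$.

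Second, I would split on whether $S_t$ is empty. If $S_t = \varnothing$, then every main-loop draw satisfies $\hat Q_s \geq C$, so $\hat U_t \geq C$; and because $\hat Q \geq 0$ almost surely one also has $\mu_C \leq C$, hence $\bar \mu_t = \hat U_t \geq C \geq \mu_C$ with probability one, contributing no failure probability. If instead $S_t \neq \varnothing$, then by construction $\bar \mu_t$ equals $\mueb$ evaluated at the iid samples $Z_1, \dots, Z_{|S_t|}$ with budget $\delta/3$. Invoking the anytime-valid guarantee of $\mueb$ conditional on $C$ yields that, with conditional probability at least $1 - \delta/3$,
\[
\mu_C \leq \mueb_n(Z_1, \dots, Z_n, \delta/3) \quad \text{for all } n \geq 1 \text{ simultaneously.}
\]
Substituting $n = |S_t|$ pointwise on this event gives $\mu_C \leq \bar \mu_t$ for every $t$ with $S_t \neq \varnothing$. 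Integrating out $C$ preserves the $1 - \delta/3$ probability.

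The main obstacle I foresee is cleanly justifying the substitution $n = |S_t|$: the $\mueb$ guarantee is stated for the natural indexing of an iid sequence, whereas $|S_t|$ is a random, data-dependent index driven by $t$. Since $|S_t|$ is nondecreasing in $t$ and is measurable with respect to the natural filtration of the subsequence, the anytime-valid bound applies pointwise on its good event, so this step is really a routine appeal to the definition of ``anytime-valid''; but it is the only place the argument goes beyond a direct plug-and-play use of \Cref{cor:mueb} and so should be written out carefully. A minor edge case is ties $\hat Q_s = C$, which can be ruled out by the strict inequality in the definition of $S_t$ or handled by a standard tie-breaking convention without affecting the rest of the argument.
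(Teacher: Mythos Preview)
Your proposal is correct and follows essentially the same route as the paper: both arguments reduce the claim to an application of \Cref{cor:mueb} by observing that, conditional on $C$, the positive $\Delta_t$'s form an iid $[0,1]$-valued sequence with common mean $\mu_C$, and then invoking the anytime-valid confidence sequence at level $\delta/3$. Your write-up is in fact a bit more careful than the paper's---you explicitly handle the case $S_t=\varnothing$ via $\mu_C\le C\le\hat U_t$ and you flag the conditioning on $C$---whereas the paper's proof is terser and simply verifies the conditional-mean hypothesis of \Cref{cor:mueb} directly; the ``obstacle'' you flag about substituting $n=|S_t|$ is precisely what the subset formulation in \Cref{cor:mueb} is there to absorb, so that step can be dispatched by appealing to the corollary as stated rather than by a separate re-indexing argument.
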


\begin{proof}[Proof of \Cref{lem:mrl}]
    We just need to verify that we can apply \Cref{cor:mueb}.
    To do this, we need to verify $\Delta \in S_t$ have the same conditional mean.

    Define $S$ and $\{ i_t \}_t$ analogously to in \Cref{cor:mueb}:
    \begin{align*}
        S = \{ t \in \N \; : \; \hat Q_t < C\}.
    \end{align*}

    Define the sequence $S = (t \in \N \; : \; \hat Q_t < C)$.
    To see that all $\Delta_{i_t}$ have the same mean conditional on the past, observe, 
    \begin{align*}
        \E_{\hat P}[\Delta_{i_t} \; | \; \Delta_{i_1}, \dots, \Delta_{i_{t-1}}, C] &= \E_{\hat P}[C - \hat Q_{i_t} \; | \; \Delta_{i_1}, \dots, \Delta_{i_{t-1}}, C]\\
        &=C - \E_{ \hat P}[\hat Q_{i_t} \; | \; \Delta_{i_1}, \dots, \Delta_{i_{t-1}}, C]\\
        &=C - \E_{\hat P}[\hat Q_{i_t}]  \tag{{Independence of $\hat Q_{i_t}$ conditional on $D$}}
    \end{align*}
    Thus, since $\hat Q_{i_t}$ are identically distributed conditional on $D$, it holds $\E_{\hat P_0} \hat Q_{i_t} = \E_{\hat P_0} \hat Q_{i_s}$ for all $s, t \in \N$ so $\{ \Delta_{i_t} \}_{t=1}^\infty$ have the same mean conditional on the past and $C$.

    Now, on the event that $\hat Q_{t+1} < C$, it holds $t+1 \in S$.
    Thus, the guarantee holds for $t+1$.
    Finally, we plug in $\delta/3$ for $\alpha$, which yields the desired result:
    \begin{align*}
        \E_{\hat P} [C - \hat Q_{t + 1} \; | \; C > \hat Q_{t + 1}, C] \leq \bar \mu_t.
    \end{align*}
\end{proof}

    The following result provides a high probability upper bound for anytime-valid bounded mean estimation.
    \begin{theorem}[Theorem 2, \citet{waudby2024estimating}] \label{thm:ramdas}
        Suppose there is a constant $\nu$ and stochastic process $(X_t)_{t=1}^\infty \sim \cP$ for some distribution $\cP$ with support bounded on $[0, 1]$ such that, for all $t$,
        \begin{align*}
            \E_\cP(X_t \; | \; X_1, \dots, X_{t-1}) = \nu.
        \end{align*}
        Let $\cF_{t} = \sigma(\{ X_i \}_{i=1}^t)$ be the $\sigma$-field induced by $X_1, \dots, X_t$.
        Next, consider any sequence $\{ \lambda_t \}_{t=1}^\infty$ such that for all $t$, $\lambda_t$ is $\cF_{t-1}$-measurable. Then, for all $\alpha > 0$, with probability at least $1-\alpha$, it holds for all $t=1,2,\dots$ simultaneously:
        \begin{align*}
            \nu \leq \frac{\log(2/\alpha) + \sum_{i=1}^t \lambda_i X_i - (X_i - \hat \nu_{i-1})^2 (\log(1 - \lambda_i) + \lambda_i)}{\sum_{i=1}^t \lambda_i}
        \end{align*}
\end{theorem}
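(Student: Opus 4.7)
The plan is the standard recipe for anytime-valid concentration: (i) construct a non-negative process $(M_t)$ that is a supermartingale under the null that $\E[X_i \mid \cF_{i-1}] = \nu$, with $\E M_0 \leq 1$; (ii) apply Ville's inequality (\Cref{thm:ville}) to obtain $\sup_t M_t < 2/\alpha$ with probability $\geq 1-\alpha$; (iii) take logarithms and algebraically invert for $\nu$. The factor of $2$ inside $\log(2/\alpha)$ hints that the construction combines two one-sided supermartingales (upper and lower tails) via a union bound, although only the upper direction is needed for the displayed inequality.

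The target supermartingale has the form
\[
M_t = \prod_{i=1}^t \exp\!\bigl(\lambda_i(\nu - X_i) + (X_i - \hat\nu_{i-1})^2(\log(1-\lambda_i)+\lambda_i)\bigr),
\]
chosen precisely so that $\log M_t$ reproduces the numerator of the stated bound once $\nu$ is isolated. The starting point is the classical betting martingale $K_t(\nu) = \prod_i (1 + \lambda_i(\nu - X_i))$, which is a non-negative martingale under the null whenever $\lambda_i \in [0,1)$ is $\cF_{i-1}$-measurable, together with the Bernstein-type pointwise inequality
\[
\log(1 + \lambda z) \geq \lambda z + z^2(\log(1-\lambda) + \lambda) \quad \text{for } \lambda \in [0,1),\, z \in [-1,1],
\]
which is easily verified by analyzing the auxiliary function $g(z) = z - \log(1+z)$ and checking that $g(z)/z^2$ is maximized at the endpoints of $[-\lambda, \lambda]$. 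Substituting $z = \lambda_i(\nu - X_i)$ yields a lower bound on $\log K_t(\nu)$ matching $\log M_t$ but with $(\nu - X_i)^2$ instead of $(X_i - \hat\nu_{i-1})^2$ in the variance proxy. Applying Ville to $K_t$ at level $\alpha/2$, symmetrizing to cover the reverse tail, taking logs, and dividing through by $\sum_i \lambda_i \geq 0$ then produces a bound of the stated shape, modulo this variance-proxy discrepancy.

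The main obstacle is exactly the swap from the true variance proxy $(\nu - X_i)^2$ to the predictable empirical proxy $(X_i - \hat\nu_{i-1})^2$: the two are not comparable pathwise nor in expectation in the direction needed to preserve the supermartingale property, so the naive substitution in the argument above breaks. The remedy, developed by \citet{waudby2024estimating}, is to define $M_t$ with $(X_i - \hat\nu_{i-1})^2$ in the exponent from the outset and to verify the one-step supermartingale condition $\E[M_t/M_{t-1} \mid \cF_{t-1}] \leq 1$ directly using a refined exponential inequality that treats $\hat\nu_{i-1}$ as an $\cF_{i-1}$-measurable constant in the conditional moment-generating-function calculation; this separates the $X_i$-linear contribution (whose conditional expectation is $0$) from the predictable quadratic contribution. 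Once this conditional bound is in hand, the remaining steps (Ville's inequality, the union bound over the two tails, and the linear inversion for $\nu$) are routine.
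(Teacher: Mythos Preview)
The paper does not prove this statement at all: it is Theorem~2 of \citet{waudby2024estimating}, quoted as an external tool and then applied (through \Cref{cor:mueb}) in the proof of \Cref{lem:mrl}. There is therefore no ``paper's own proof'' to compare your proposal against.

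For what it is worth, your outline tracks the actual argument in \citet{waudby2024estimating} closely. The concrete mechanism behind what you call ``a refined exponential inequality'' is: apply Fan's inequality $\exp(\lambda\xi - \psi(\lambda)\xi^2) \le 1 + \lambda\xi$ (valid for $\xi \ge -1$, $\lambda\in[0,1)$, with $\psi(\lambda) = -\log(1-\lambda)-\lambda$) to $\xi_i = \hat\nu_{i-1} - X_i$, take conditional expectations to get a right-hand side of $1+\lambda_i(\hat\nu_{i-1}-\nu)$, and then use $1+x\le e^x$ to absorb the nonzero conditional mean into the exponent. This yields exactly the one-step bound
\[
\E\bigl[\exp\bigl(\lambda_i(\nu - X_i) - \psi(\lambda_i)(X_i-\hat\nu_{i-1})^2\bigr)\,\big|\,\cF_{i-1}\bigr]\le 1,
\]
so that your displayed $M_t$ is a nonnegative supermartingale, and Ville's inequality at level $\alpha/2$ plus the algebraic inversion gives the stated upper bound on $\nu$. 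Your identification of the variance-proxy swap as the only nontrivial step, and of the $\log(2/\alpha)$ as coming from a two-sided union bound in the original, is accurate.
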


We state the following corollary \Cref{thm:ramdas} which states the result for subsequences of random processes (which amounts to a re-indexing) and uses a particular choice of $\lambda_t$.
This result follows the recommendations for $\lambda_t$ in \citet{waudby2024estimating} and is an empirical Bernstein-type bound.
\begin{corollary} \label{cor:mueb}
    Suppose there is a constant $\nu$ and stochastic process $(X_t)_{t=1}^\infty \sim \cP$ for some distribution $\cP$ with support bounded on $[0, 1]$. 
    Define a sequence of subsets $S_t$ such that $S_{t-1} \subseteq S_t$ and $S_t \setminus S_{t-1} \subseteq \{ t \}$.
    Suppose, for all $t$ such that $t \in S_t$ and $i \in S_t$,
        \begin{align*}
            \E_\cP(X_t \; | \; S_{t-1}) = \nu.
        \end{align*}
    For all $\alpha \in (0, 1]$, define 
    \begin{align}
        &{\lambda_t
        \defeq \min\left\{\sqrt{\frac{2 \log (2/\alpha)}{\hat \sigma_{t-1}^2 \abs{S_t} \log(1+\abs{S_t})}}, \frac{1}{2} \right\}} \label{def:lambda}
    \end{align}
    where 
    \begin{align*}
        &{\hat \nu_t
        \defeq \frac{\frac{1}{2} + \sum_{i \in S_t} X_i}{1 + \abs{S_t}}}, \text{ and} \\
        &{\hat \sigma_t^2
        \defeq \frac{\frac{1}{4} + \sum_{i \in S_t} (X_i  - \hat \nu_i)^2}{1 + \abs{S_t}}}.
    \end{align*}
    Finally, for all $t$, let 
    \begin{align}
        \mueb_t(\{ X_s \}_{s=1}^t, \alpha, S_t) \defeq \frac{\log(2/\alpha) + \sum_{i \in S_t}\lambda_i X_i - (X_i - \hat \nu_{i-1})^2 (\log(1 - \lambda_i) + \lambda_i)}{\sum_{i \in S_t} \lambda_i}. \label{def:mueb}
    \end{align}
    Then, with probability at least $1 -\alpha$, it holds for all $t=1,2,\dots $ simultaneously:
    \begin{align*}
        \nu \leq \mueb_t(\{ X_s \}_{s=1}^t, \alpha, S_t)
    \end{align*}
\end{corollary}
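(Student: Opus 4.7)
The plan is to reduce Corollary \ref{cor:mueb} to a direct application of Theorem \ref{thm:ramdas} via a reindexing argument on the included subsequence. Let $\tau_1 < \tau_2 < \cdots$ enumerate the (possibly random) indices that eventually enter $S$, so that $S_t = \{\tau_1, \ldots, \tau_{|S_t|}\}$, and define $Y_k \defeq X_{\tau_k}$. I will argue that the process $(Y_k)_{k \ge 1}$, paired with the predictable sequence $\tilde\lambda_k \defeq \lambda_{\tau_k}$, satisfies the hypotheses of Theorem \ref{thm:ramdas}, and that the bound yielded at step $k = |S_t|$ is exactly $\mueb_t(\{X_s\}_{s=1}^t,\alpha,S_t)$.

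First I would verify the conditional-mean condition. The $Y_k$ inherit the $[0,1]$ bound from $X_t$. The structural condition $S_t \setminus S_{t-1} \subseteq \{t\}$ ensures that the event $\{\tau_k = t\}$ is determined by the inclusion history through step $t$, i.e.\ by $S_{t-1}$ together with the (assumed predictable) decision of whether $t$ enters $S_t$. Combined with the hypothesis $\E_\cP[X_t \mid S_{t-1}] = \nu$ whenever $t \in S_t$, the tower property gives $\E_\cP[Y_k \mid \sigma(Y_1,\ldots,Y_{k-1})] = \nu$, as required. Next, inspecting \Cref{def:lambda}, $\lambda_t$ depends on $\hat\sigma^2_{t-1}$, $\hat\nu_{t-1}$, and $|S_t|$, all of which are functions of $\{X_i : i \in S_{t-1}\} = \{Y_1,\ldots,Y_{k-1}\}$ once $k$ is determined; the clipping at $1/2$ together with $\hat\sigma^2_{t-1} > 0$ (thanks to the $\tfrac{1}{4}$ initialization) ensures $\tilde\lambda_k \in (0, 1/2]$, which is required by Theorem \ref{thm:ramdas}.

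Applying Theorem \ref{thm:ramdas} to $(Y_k, \tilde\lambda_k)$ yields, with probability $\ge 1 - \alpha$, simultaneously for all $k \in \N$,
\[
\nu \;\le\; \frac{\log(2/\alpha) + \sum_{j=1}^k \tilde\lambda_j Y_j - (Y_j - \hat\nu_{j-1})^2\bigl(\log(1-\tilde\lambda_j)+\tilde\lambda_j\bigr)}{\sum_{j=1}^k \tilde\lambda_j}.
\]
Translating back to the original indexing, setting $k = |S_t|$ and substituting $\tilde\lambda_j = \lambda_{\tau_j}$ and $Y_j = X_{\tau_j}$ rewrites both sums as sums over $i \in S_t$, producing exactly $\mueb_t(\{X_s\}_{s=1}^t,\alpha,S_t)$ as in \Cref{def:mueb}. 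Since the Ville-type event underlying Theorem \ref{thm:ramdas} is anytime-valid in the subsequence index $k$ and the map $t \mapsto |S_t|$ is nondecreasing, the translated inequality holds for all original indices $t$ simultaneously on the same probability-$(1-\alpha)$ event. For $t$ with $S_t = \varnothing$, the bound is vacuous (or handled separately upstream in \Cref{alg:av3} via the $\hat U_t$ fallback branch).

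The main obstacle is bookkeeping for the filtration: one needs to formalize that $\tau_k$ is a stopping time with respect to the natural filtration $\F_t = \sigma(X_1,\ldots,X_t, \mathbf{1}\{1\in S_1\},\ldots,\mathbf{1}\{t\in S_t\})$, and that conditioning on $S_{t-1}$ (in the hypothesis) aligns with conditioning on $\sigma(Y_1,\ldots,Y_{k-1})$ at the step where $Y_k$ is added, so that the equality $\E[X_t \mid S_{t-1}] = \nu$ lifts to the required martingale-difference condition on $Y_k$. Once this is set up carefully, the rest is essentially an accounting exercise matching the two forms of the bound.
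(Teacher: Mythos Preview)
Your proposal is correct and takes essentially the same approach as the paper: the paper's proof also enumerates the included indices $i_t$ (your $\tau_k$), plugs the reindexed subsequence $\{X_{i_s}\}$ into Theorem~\ref{thm:ramdas}, and notes that $\lambda_t$ is $\cF_{t-1}$-measurable. Your write-up is in fact more detailed than the paper's, which dispatches the whole thing in three sentences without spelling out the filtration bookkeeping or the $S_t=\varnothing$ case.
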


\begin{proof}[Proof of \Cref{cor:mueb}]
    Define the sequence $S = (t \in \N \; : \; X_t \in S_t)$.
    Denote by $i_t$ the $t$-th element of $S$.
    Clearly, $\lambda_t$ is $\cF_{t-1}$-measurable.
    To apply the theorem, we plug in the sequence $\{ X_{i_s} \}_{s=1}^{\abs{S_t}}$ as defined in for $X_t$ in \Cref{thm:ramdas}.
\end{proof}

\section{A sharper upper bound for \Cref{thm:always-valid-min} with an almost matching lower bound.}

\begin{theorem}
    Let $\{U_t\}_{t=1}^\infty$ be a sequence of iid uniform random variables on $[0, 1]$.
    Let $V_t = \min_{s \in [t]} U_s$.
    For any constant $\varepsilon > 0$,
    define\footnote{By convention, $\inf \varnothing = \infty$.}
    \begin{align*}
        \tilde p_t(\delta) \triangleq \min\left\{1, \inf_{q \in [0, e^{-1})}\left\{\int_0^{q} \frac{1}{(1-\theta)^t}
        \frac{\varepsilon}{\theta \cdot (\log(1/\theta))^{1+\varepsilon}}
        \, d\theta
        \ge \frac{1}{\delta}
        \right\}
        \right\}.
    \end{align*}
    Then,
    \begin{align*}
        \Pr\{\exists t ~ V_t > \tilde p_t(\delta)\} \le \frac{1}{\delta}.
        \numberthis \label{eq:av-ptilde}
    \end{align*}
    Asymptotically,
    \begin{align*}
        \lim_{t \to \infty} \frac{\tilde p_t(\delta)}{\frac{\log \log t}{t}} \in [1, 1+\varepsilon].
    \end{align*}
    Moreover, this is nearly tight: for any sequence $\{q_t\}_{t=1}^\infty$,
    \begin{align*}
        \Pr\{\exists t ~ V_t > q_t\} \le \frac{1}{\delta}
        \Longrightarrow
        \lim_{t \to \infty} \frac{q_t}{\frac{\log \log t}{t}} \ge 1.
    \end{align*}
    \label{thm:always-valid-min-iterated-log}
\end{theorem}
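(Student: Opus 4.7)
The plan is to reuse the test martingale from the proof of \Cref{thm:always-valid-min}, mix against a heavier-tailed density concentrating near $\theta = 0$, extract the $\limsup$ bound via Laplace-type asymptotics of the implicit definition of $\tilde p_t$, and establish the lower bound on any anytime-valid sequence via a sparse subsequence Borel--Cantelli argument (from which the $\liminf$ for $\tilde p_t$ follows by applying the lower bound to itself).

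For the anytime-valid upper bound, I would start from the same non-negative test martingale $M_t(\theta) = (1-\theta)^{-t} \ind\{V_t \geq \theta\}$ as in the previous proof, but mix against the density $d\pi(\theta) = \frac{\varepsilon}{\theta (\log(1/\theta))^{1+\varepsilon}} d\theta$ on $(0, e^{-1})$. The substitution $u = \log(1/\theta)$ shows $\int d\pi = \int_1^\infty \varepsilon u^{-1-\varepsilon} du = 1$, so $\pi$ is a probability measure. By Fubini, $M_t^\pi \defeq \int M_t(\theta) d\pi(\theta)$ is a non-negative test martingale, and Ville's inequality gives $\Pr\{\exists t : M_t^\pi \geq 1/\delta\} \leq \delta$. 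Since $M_t^\pi = \int_0^{V_t} (1-\theta)^{-t} d\pi(\theta)$ is strictly increasing in $V_t$, this event coincides with $\{V_t > \tilde p_t(\delta)\}$, establishing~\eqref{eq:av-ptilde}.

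For the $\limsup$ rate, I would use $(1-\theta)^{-t} = e^{t\theta}(1 + o(1))$ uniformly on $\theta \leq q$ when $q = O(\log \log t / t)$ (since $t\theta^2 = o(1)$ there), so the relevant integral reduces to $\int_0^q e^{t\theta} d\pi(\theta)$. The substitution $v = t\theta$ recasts this as $\int_0^{tq} \varepsilon e^v / (v (\log(t/v))^{1+\varepsilon}) dv$, an exponential-integral-type quantity dominated by its upper endpoint. Extracting the leading contribution via $\mathrm{Ei}$-type asymptotics gives $\int_0^q (1-\theta)^{-t} d\pi \asymp e^{tq} / (tq (\log(1/q))^{1+\varepsilon})$; setting this equal to $1/\delta$ and taking logarithms yields $tq = (1+\varepsilon) \log \log t + O(\log \log \log t)$, proving $\limsup \tilde p_t(\delta) / (\log \log t / t) \leq 1+\varepsilon$. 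The matching $\liminf \geq 1$ then follows immediately from the anytime-valid lower bound below, applied to $q_t = \tilde p_t(\delta)$.

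The main obstacle is the lower bound on arbitrary anytime-valid sequences, since $\{V_t\}$ is monotone non-increasing and hence highly dependent, so standard independent Borel--Cantelli does not apply directly. My plan is a sparse subsequence argument. Suppose for contradiction that $q_t \leq (1-\eta) \log \log t / t$ for all large $t$ and some $\eta > 0$. Take $t_k = \lceil e^{e^k} \rceil$, so $\log \log t_k \sim k$ and $t_{k-1}/t_k \to 0$ super-exponentially. Conditional on $V_{t_{k-1}} > q_{t_k}$ (which holds with high probability because $V_{t_{k-1}}$ is of order $1/t_{k-1}$, far exceeding $q_{t_k} \sim k/t_k$), the event $\{V_{t_k} > q_{t_k}\}$ is equivalent to the fresh $t_k - t_{k-1} \sim t_k$ uniforms all exceeding $q_{t_k}$, which is conditionally independent of the history and has probability $(1 - q_{t_k})^{t_k - t_{k-1}} = k^{-(1-\eta) + o(1)}$. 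Since $\sum_k k^{-(1-\eta)} = \infty$, a conditional Borel--Cantelli argument along this subsequence forces $\{V_{t_k} > q_{t_k}\}$ i.o.\ almost surely, contradicting any anytime-valid guarantee with probability strictly less than one.
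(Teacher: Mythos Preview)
Your construction of the mixture martingale with density $\nu(\theta)=\varepsilon/(\theta(\log(1/\theta))^{1+\varepsilon})$ on $(0,e^{-1})$, the Ville step yielding \eqref{eq:av-ptilde}, and the Laplace/$\Ei$-type asymptotics for the $\limsup$ bound all match the paper's proof essentially line for line. The paper also obtains the $\liminf\geq 1$ for $\tilde p_t$ by applying the general lower bound to $\tilde p_t$ itself, just as you propose.

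The gap is in your direct Borel--Cantelli argument for the lower bound. With $t_k=\lceil e^{e^k}\rceil$ you have $\log\log t_k\sim k$, so $q_{t_k}t_k\leq (1-\eta)k$ and hence
\[
(1-q_{t_k})^{t_k-t_{k-1}}\;\approx\;\exp\bigl(-q_{t_k}t_k\bigr)\;\geq\;e^{-(1-\eta)k},
\]
not $k^{-(1-\eta)}$ as you wrote. You appear to have conflated $\log\log t_k$ with $\log t_k$. The correct block-survival probabilities are (at best) geometrically small, so $\sum_k e^{-(1-\eta)k}<\infty$ and the second Borel--Cantelli lemma gives nothing. A denser subsequence such as $t_k=\lceil e^k\rceil$ does produce the harmonic-type rate $k^{-c}$ with $c<1$, but then $t_{k-1}/t_k\to e^{-1}$ and your ``$V_{t_{k-1}}>q_{t_k}$ with high probability'' step fails (indeed $V_{t_{k-1}}$ is of order $1/t_{k-1}\approx e/t_k$, which is eventually \emph{smaller} than $q_{t_k}\approx (1-\eta)\log k/t_k$). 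One then needs a genuine dependent Borel--Cantelli argument (e.g.\ Kochen--Stone), which is exactly the content of the law of the iterated logarithm for uniform minima.

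The paper does not carry out any of this; it simply invokes \citet[Theorem~1]{robbins1972law}, which states $\Pr\{V_t\geq (\log\log t+2\log\log\log t)/t\ \text{i.o.}\}=1$, and reads off the conclusion. If you are willing to cite that result, your proof is complete; if you want a self-contained argument, the subsequence and the dependency handling both need to be reworked.
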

\begin{proof}
    The lower bound follows directly from \citet[][Theorem 1]{robbins1972law}, which states that
    \begin{align*}
        \Pr\left\{
        V_t \ge \frac{\log \log t + 2 \log \log \log t}{t} ~ ~ ~ \text{i.o.}
        \right\}
        =1.
    \end{align*}
    For any $\{q_t\}_{t=1}^\infty$ such that
    \begin{align*}
        \lim_{t \to \infty} \frac{q_t}{\frac{\log \log t}{t}} < 1,
    \end{align*}
    there is some $t^*$ such that for all $t \ge t^*$, $q_t < \frac{\log \log t + 2 \log \log \log t}{t}$.
    But this means that $V_t > q_t$ infinitely often for $t \ge t^*$, so $\Pr\{\exists t ~ V_t > q_t\} = 1$.

    For our upper bound, we follow the proof of \Cref{thm:always-valid-min} to define the test martingale
    \begin{align*}
        M_t(\theta)
        &\triangleq \frac{1}{(1-\theta)^t} \indic{V_t \ge \theta}.
    \end{align*}
    In \Cref{thm:always-valid-min}, we mixed this martingale over the uniform distribution over $[0, 1]$ for $\theta$.
    This lead to an asymptotically loose bound:
    \begin{align*}
        \bar p_t(\delta)
        &= 1 - \paren{\frac{t-1}{\delta} + 1}^{-1/(t-1)} \\
        &= 1 - \exp\b{-\frac{1}{t-1} \log \paren{\frac{t-1}{\delta} + 1}}.
    \end{align*}
    By \Cref{lem:asymp-simplify},
    \begin{align*}
        \bar p_t(\delta)
        &\sim \frac{1}{t-1} \log \paren{\frac{t-1}{\delta} + 1} \\
        &\sim \frac{\log t}{t}.
    \end{align*}

    To get something asymptotically tight, we need to mix with a distribution that places more mass on very small values of $\theta$.
    For some constant $\varepsilon > 0$, consider the distribution
    \begin{align*}
        \nu(\theta)
        &\triangleq \frac{\varepsilon}{\theta \cdot (\log(1/\theta))^{1+\varepsilon}}
    \end{align*}
    defined on $(0, e^{-1})$.
    This is a valid probability distribution because
    \begin{align*}
        \int_0^{e^{-1}} \nu(\theta) \, d\theta
        &= \int_0^{e^{-1}} \frac{\varepsilon}{\theta \cdot (\log(1/\theta))^{1+\varepsilon}} \, d\theta \\
        &= \varepsilon \int_1^{\infty} u^{-1-\varepsilon} \, du
        \tag{substitute $u = \log(1/\theta)$} \\
        &= \varepsilon \cdot \frac{-1}{\varepsilon} u^{-\varepsilon} \bigg|_{1}^{\infty} \\
        &=  u^{-\varepsilon} \bigg|_{\infty}^{1} \\
        &=  1.
    \end{align*}

    We define our test martingale to be a mixture of $M_t$ over this distribution $\nu$:
    \begin{align*}
        M_t^N(\theta)
        &\triangleq \int_0^{e^{-1}} M_t(\theta) \nu(\theta) \, d\theta \\
        &= \varepsilon \int_0^{\min(e^{-1}, V_t)} \frac{1}{(1-\theta)^t} \frac{1}{\theta (\log(1/\theta))^{1+\varepsilon}} \, d\theta.
    \end{align*}
    Again, this is a nonnegative martingale by Fubini's theorem, using the fact that $M_t(\theta)$ is a nonnegative martingale as shown in the proof of \Cref{thm:always-valid-min}.
    Applying Ville's inequality (\Cref{thm:ville}), for any $\delta \in (0, 1)$,
    \begin{align*}
        \Pr\left\{
        \exists t ~ M_t^N(\theta) > \frac{1}{\delta}
        \right\}
        \le \delta.
    \end{align*}

    Define
    \begin{align*}
        \tilde p_t(\delta) \triangleq \min\left\{1, \inf\left\{q \in [0, e^{-1}) :
        \int_0^{q} \frac{1}{(1-\theta)^t}
        \frac{\varepsilon}{\theta \cdot (\log(1/\theta))^{1+\varepsilon}}
        \, d\theta
        \ge \frac{1}{\delta}
        \right\}
        \right\}.
    \end{align*}
    For sufficiently large $t$, the set over which we are taking the infimum will be nonempty, and for such $t$,
    \begin{align*}
        \int_0^{\tilde p_t(\delta)} \frac{1}{(1-\theta)^t}
        \frac{\varepsilon}{\theta \cdot (\log(1/\theta))^{1+\varepsilon}}
        \, d\theta
        = \frac{1}{\delta}.
    \end{align*}

    By a simple monotonicity argument,
    \begin{align*}
        \{V_t > \tilde p_t(\delta) \}
        &\Longleftrightarrow
        \{V_t > \tilde p_t(\delta), \tilde p_t(\delta) < e^{-1} \} \\
        &\Longrightarrow
        \varepsilon \int_0^{\min(e^{-1}, V_t)} \frac{1}{(1-\theta)^t} \frac{1}{\theta (\log(1/\theta))^{1+\varepsilon}} \, d\theta > \frac{1}{\delta} \\
        &\Longleftrightarrow \left\{M_t^N > \frac{1}{\delta}\right\}.
    \end{align*}
    Therefore,
    \begin{align*}
        \Pr\left\{
        \exists t ~ V_t > \tilde p_t(\delta)
        \right\}
        &\le
        \delta,
    \end{align*}
    which proves~\Cref{eq:av-ptilde}.

    Finally, we provide the required asymptotic equivalence.
    For sufficiently large $t$,
    \begin{align*}
        \frac{1}{\delta}
        &=
        \int_0^{\tilde p_t(\theta)}
         \frac{1}{(1-\theta)^t}
        \frac{\varepsilon}{\theta \cdot (\log(1/\theta))^{1+\varepsilon}} \, d\theta \\
        &= 
        \int_0^{1/t}
         \frac{1}{(1-\theta)^t}
        \frac{\varepsilon}{\theta \cdot (\log(1/\theta))^{1+\varepsilon}} \, d\theta
        +
        \int_{1/t}^{\tilde p_t(\theta)}
         \frac{1}{(1-\theta)^t}
        \frac{\varepsilon}{\theta \cdot (\log(1/\theta))^{1+\varepsilon}} \, d\theta \\
        &\ge 
        \int_{1/t}^{\tilde p_t(\theta)}
         \frac{1}{(1-\theta)^t}
        \frac{\varepsilon}{\theta \cdot (\log(1/\theta))^{1+\varepsilon}} \, d\theta.
        \tag{The integrand is nonnegative.}
    \end{align*}
    Next, we bound
    \begin{align*}
        \int_{1/t}^{\tilde p_t(\theta)}
         \frac{1}{(1-\theta)^t}
        \frac{\varepsilon}{\theta \cdot (\log(1/\theta))^{1+\varepsilon}} \, d\theta
        &= \varepsilon \int_1^{t \tilde p_t(\theta)} \frac{1}{(1 - v/t)^t} \frac{1}{v/t (\log (t/v))^{1+\varepsilon}} \, \frac{dv}{t}
        \tag{Substitute $v = \theta t$}
        \\
        &= \varepsilon \int_1^{t \tilde p_t(\theta)} \frac{1}{(1 - v/t)^t} \frac{1}{v (\log (t/v))^{1+\varepsilon}} \, dv \\
        &\ge
        \varepsilon \int_1^{t \tilde p_t(\theta)} e^{v} \frac{1}{v (\log (t/v))^{1+\varepsilon}} \, dv
        \tag{$1/(1-v/t)^t \ge e^{v}$ for $v \ge 1$}
        \\
        &=
        \frac{\varepsilon}{(\log t)^{1+\varepsilon}} \int_1^{t \tilde p_t(\theta)} e^{v} \frac{1}{v (1 - \frac{\log v}{\log t})^{1+\varepsilon}} \, dv
        \\
        &\ge
        \frac{\varepsilon}{(\log t)^{1+\varepsilon}} \int_1^{t \tilde p_t(\theta)} \frac{e^{v}}{v} \, dv.
        \tag{$\log v \ge 0$ for $v \ge 1$}
    \end{align*}
    Here, we have used the fact that $\tilde p_t(\theta) < 1$ for large $t$.

    Putting this together, we have
    \begin{align*}
        \frac{1}{\delta}
        &\ge \frac{\varepsilon}{(\log t)^{1+\varepsilon}} \int_1^{t \tilde p_t(\delta)} \frac{e^v}{v} \, dv.
    \end{align*}
    Consider the sequence $z_t$ implicitly defined as
    \begin{align*}
        \frac{\varepsilon}{(\log t)^{1+\varepsilon}} \int_1^{z_t} \frac{e^v}{v} \, dv
        &= \frac{1}{\delta}.
    \end{align*}
    Clearly, $z_t \ge t \tilde p_t(\delta)$ because the integrand $e^v/v$ is nonnegative.
    We will show that $z_t \sim (1+\varepsilon) \log \log t$.

    The exponential integral $\Ei$ is defined
    \begin{align*}
        \Ei(x)
        &\triangleq \int_{-\infty}^x \frac{e^v}{v} \, dv.
    \end{align*}
    Therefore,
    \begin{align*}
        \frac{\varepsilon}{(\log t)^{1+\varepsilon}} \int_1^{z_t} \frac{e^v}{v} \, dv
        &= \frac{\varepsilon}{(\log t)^{1+\varepsilon}} (\Ei(z_t) - \Ei(1)).
    \end{align*}
    By definition of $z_t$, for all $t$,
    \begin{align*}
        \frac{\delta \varepsilon}{(\log t)^{1+\varepsilon}} (\Ei(z_t) - \Ei(1))
        &= 1.
    \end{align*}
    Therefore,
    \begin{align*}
        \lim_{t \to \infty}
        \frac{\delta \varepsilon}{(\log t)^{1+\varepsilon}} (\Ei(z_t) - \Ei(1))
        &= 1 \\
        \lim_{t \to \infty}
        \frac{\delta \varepsilon}{(\log t)^{1+\varepsilon}} \Ei(z_t)
        &= 1
    \end{align*}
    We can write this with the asymptotic relation
    \begin{align*}
        \delta \varepsilon \Ei(z_t)
        &\sim (\log t)^{1+\varepsilon}.
    \end{align*}
    By the lower bound shown above, we must have $z_t \ge t \tilde p_t(\delta) = \Omega(\log \log t)$, meaning $z_t \to \infty$.
    By \Cref{lem:Ei-asymptotic}, if $z_t \to \infty$, then $\Ei(z_t) \sim e^{z_t}/z_t$.
    Therefore,
    \begin{align*}
        \delta \varepsilon \Ei(z_t)
        &\sim (\log t)^{1+\varepsilon} \\
        \delta \varepsilon \frac{e^{z_t}}{z_t}
        &\sim (\log t)^{1+\varepsilon} \\
        e^{z_t - \log z_t}
        &\sim \frac{(\log t)^{1+\varepsilon}}{\delta \varepsilon} \\
        z_t - \log z_t
        &\sim \log\paren{\frac{(\log t)^{1+\varepsilon}}{\delta \varepsilon}}
        \tag{\Cref{lem:asmyp-log}} \\
        z_t
        &\sim (1+\varepsilon) \log \log t.
    \end{align*}

    Because $t \tilde p_t(\delta) \le z_t$,
    \begin{align*}
        \lim_{t \to \infty} \frac{\tilde p_t(\delta)}{\frac{\log \log t}{t}}
        \le \lim_{t \to \infty} \frac{z_t}{\log \log t}
        = 1+\varepsilon.
    \end{align*}
    The lower bound we began with yields
    \begin{align*}
        \lim_{t \to \infty} \frac{\tilde p_t(\delta)}{\frac{\log \log t}{t}} \ge 1,
    \end{align*}
    completing the proof.
\end{proof}

\begin{lemma}
    \label{lem:asymp-simplify}
    For a sequence $\{a_t\}_{t=1}^{\infty}$, if $\lim_{t \to \infty} a_t = 0$, then
    \begin{align*}
        1 - e^{a_t} \sim - a_t.
    \end{align*}
\end{lemma}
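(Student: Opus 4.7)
The plan is to reduce the claim to the standard calculus fact that $\lim_{x \to 0} (e^x - 1)/x = 1$, which is just the definition of the derivative of $e^x$ at $0$. Recall that the assertion $1 - e^{a_t} \sim -a_t$ (as $t \to \infty$) means $\lim_{t \to \infty} (1 - e^{a_t})/(-a_t) = 1$, equivalently $\lim_{t \to \infty} (e^{a_t} - 1)/a_t = 1$, provided $a_t \neq 0$ eventually. (If $a_t = 0$ for infinitely many $t$, we can discard those indices since both sides of the asymptotic equivalence vanish there, or equivalently restrict to a subsequence.)

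First, I would write the Taylor expansion $e^x = 1 + x + R(x)$, where $R(x) = \sum_{k \ge 2} x^k/k!$. A crude bound gives $|R(x)| \le x^2 \sum_{k \ge 0} |x|^k/(k+2)! \le x^2 \cdot e^{|x|}/2$ for all $x \in \mathbb{R}$, so in particular $|R(x)|/|x| \le |x| \cdot e^{|x|}/2 \to 0$ as $x \to 0$. Substituting $x = a_t$ and dividing through by $a_t$ yields
\begin{equation*}
\frac{e^{a_t} - 1}{a_t} = 1 + \frac{R(a_t)}{a_t}.
\end{equation*}
Since $a_t \to 0$ by hypothesis, the bound above gives $R(a_t)/a_t \to 0$, so $(e^{a_t} - 1)/a_t \to 1$, which is exactly the asymptotic equivalence claimed.

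There is no serious obstacle here; the only care required is to rule out the degenerate case where $a_t = 0$ for infinitely many $t$ (in which case the ratio is undefined on those indices, but the equivalence still holds trivially since both sequences vanish together). Alternatively, the cleanest presentation is simply to invoke the continuity and differentiability of $e^x$ at $0$: the function $\varphi(x) \triangleq (e^x - 1)/x$ (extended by $\varphi(0) = 1$) is continuous at $0$, and $a_t \to 0$ therefore implies $\varphi(a_t) \to 1$, which is the desired conclusion after multiplying numerator and denominator by $-1$.
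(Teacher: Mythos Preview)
Your proposal is correct and essentially matches the paper's proof: both reduce to $\lim_{t\to\infty}(e^{a_t}-1)/a_t = 1$ and recognize this as the derivative of $e^x$ at $0$ (the paper writes it literally as $\frac{d}{du}e^u|_{u=0}$, which is exactly your ``cleanest presentation'' at the end). Your Taylor-remainder bound is a harmless elaboration of the same idea.
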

\begin{proof}
    We must show that
    \begin{equation}
        \lim_{t \to \infty} \frac{1 - e^{a_t}}{-a_t} = 1.
        \label{eq:asymp-goal}
    \end{equation}
    We proceed as follows.
    \begin{align*}
        \lim_{t \to \infty} \frac{1 - e^{a_t}}{-a_t}
        &= \lim_{t \to \infty} \frac{e^{a_t} - 1}{a_t} \\
        &= \lim_{u \to 0} \frac{e^{u} - 1}{u}
        \tag{$\lim_{t \to \infty} a_t = 0$}
        \\
        &= \lim_{u \to 0} \frac{e^{0 + u} - e^0}{u} \\
        &= \frac{d}{du} e^u \bigg|_{u = 0} = 1.
    \end{align*}
\end{proof}

\begin{lemma}
    \label{lem:Ei-asymptotic}
    As $z \to \infty$,
    \begin{align*}
        \Ei(z) \sim \frac{e^z}{z}.
    \end{align*}
\end{lemma}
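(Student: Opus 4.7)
}
The plan is to apply L'H\^opital's rule to the ratio $\Ei(z) / (e^z/z)$. First I would verify that both the numerator and denominator tend to $+\infty$ as $z \to \infty$, which allows the rule to apply. The denominator $e^z/z$ clearly diverges to $+\infty$. For the numerator, note that $\Ei'(z) = e^z/z$ by the fundamental theorem of calculus (using the standard principal-value interpretation at the origin, which is irrelevant for large $z$), and since this derivative is eventually positive and unbounded, $\Ei(z) \to \infty$ as well.

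Next I would differentiate both numerator and denominator. We have $\Ei'(z) = e^z/z$ directly, and
\[
    \frac{d}{dz}\!\left( \frac{e^z}{z} \right) = \frac{e^z}{z} - \frac{e^z}{z^2} = \frac{e^z}{z}\left(1 - \frac{1}{z}\right).
\]
Therefore
\[
    \lim_{z \to \infty} \frac{\Ei(z)}{e^z/z}
    = \lim_{z \to \infty} \frac{e^z/z}{(e^z/z)(1 - 1/z)}
    = \lim_{z \to \infty} \frac{1}{1 - 1/z}
    = 1,
\]
which is exactly the claim $\Ei(z) \sim e^z/z$.

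There is essentially no obstacle here: the main thing to check is that L'H\^opital is valid (both the $\infty/\infty$ hypothesis and the fact that the derivative-ratio limit exists), and that $\Ei'(z) = e^z/z$ holds for $z$ away from $0$ under whatever convention is used to resolve the singularity of $e^v/v$ at $v=0$. If one preferred to avoid L'H\^opital, an equivalent proof proceeds by a single integration by parts: writing $\Ei(z) = e^z/z + \int_{-\infty}^z e^v/v^2\,dv$ (again understood as a principal value), and then showing that $\int_{-\infty}^z e^v/v^2\, dv = o(e^z/z)$ by splitting the integral at, say, $z/2$ and bounding each piece---the tail from $z/2$ to $z$ is at most $(e^z - e^{z/2}) \cdot 4/z^2 = o(e^z/z)$, and the remainder is bounded by a fixed constant (depending on the principal-value normalization) plus $\int_1^{z/2} e^v/v^2\,dv \le e^{z/2}$, which is negligible. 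Either route yields the asymptotic equivalence.
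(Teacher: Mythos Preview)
Your proposal is correct and matches the paper's proof essentially line for line: both apply L'H\^opital's rule to $\Ei(z)/(e^z/z)$, use $\Ei'(z)=e^z/z$, and simplify the derivative ratio to $1/(1-1/z)\to 1$. Your extra care in checking the $\infty/\infty$ hypothesis and the alternative integration-by-parts sketch are fine additions but not needed beyond what the paper does.
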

\begin{proof}
    \begin{align*}
        \lim_{z \to \infty} \frac{\Ei(z)}{\frac{e^z}{z}}
        &= \lim_{z \to \infty} \frac{\frac{d}{dz} \Ei(z)}{\frac{d}{dz} \frac{e^z}{z}} \\
        &= \lim_{z \to \infty} \frac{\frac{e^z}{z}}{\frac{z e^z - e^z}{z^2}} \\
        &= \lim_{z \to \infty} \frac{1}{\frac{z-1}{z}} \\
        &= 1.
    \end{align*}
\end{proof}

\begin{lemma}
    Consider sequences $\{a_t\}_{t=1}^{\infty}, \{b_t\}_{t=1}^{\infty}$ that both go to $\infty$ as $t \to \infty$.
    If $a_t \sim b_t$, then $\log a_t \sim \log b_t$.
    \label{lem:asmyp-log}
\end{lemma}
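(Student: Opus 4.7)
The plan is to reduce the asymptotic equivalence of logarithms to the fact that $\log a_t - \log b_t$ tends to $0$, and then divide by $\log b_t$, which blows up. Concretely, I would start from the hypothesis $a_t \sim b_t$, which by definition says $a_t/b_t \to 1$. Since the logarithm is continuous at $1$, this yields $\log(a_t/b_t) \to \log 1 = 0$, and by the functional equation for $\log$ we have $\log a_t - \log b_t = \log(a_t/b_t) \to 0$. Note that this step implicitly uses that $a_t, b_t > 0$ eventually, which follows from $a_t, b_t \to \infty$.

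Next, since $b_t \to \infty$, we have $\log b_t \to \infty$, and in particular $\log b_t > 0$ for all sufficiently large $t$, so the ratio $\log a_t / \log b_t$ is eventually well-defined. I would then use the algebraic identity
\begin{align*}
    \frac{\log a_t}{\log b_t}
    = 1 + \frac{\log a_t - \log b_t}{\log b_t}.
\end{align*}
The numerator of the second term tends to $0$ by the previous paragraph, and the denominator tends to $\infty$, so their ratio tends to $0$. Therefore $\log a_t / \log b_t \to 1$, which is exactly the statement $\log a_t \sim \log b_t$.

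There is no real obstacle here; the only thing to be careful about is that both $\log a_t$ and $\log b_t$ are positive and nonzero for large $t$ so that the ratio makes sense, which is immediate from $a_t, b_t \to \infty$. This lemma is precisely the kind of routine asymptotic manipulation needed in the preceding proof of \Cref{thm:always-valid-min-iterated-log} when passing from $e^{z_t - \log z_t} \sim (\log t)^{1+\varepsilon}/(\delta\varepsilon)$ to $z_t - \log z_t \sim (1+\varepsilon)\log\log t$.
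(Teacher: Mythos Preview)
Your proof is correct and follows essentially the same approach as the paper: both use the identity $\frac{\log a_t}{\log b_t} = 1 + \frac{\log(a_t/b_t)}{\log b_t}$ and conclude by noting the numerator tends to $0$ while the denominator tends to $\infty$. Your version is slightly more explicit about positivity and the continuity of $\log$ at $1$, but the argument is identical.
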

\begin{proof}
    \begin{align*}
        \lim_{t \to \infty} \frac{\log a_t}{\log b_t}
        &= \lim_{t \to \infty} \frac{\log \paren{b_t \cdot \frac{a_t}{b_t}}}{\log b_t} \\
        &= \lim_{t \to \infty} \frac{\log b_t + \log \paren{\frac{a_t}{b_t}}}{\log b_t} \\
        &= \lim_{t \to \infty} 1 + \frac{\log\paren{\frac{a_t}{b_t}}}{\log b_t} \\
        &= 1 + \frac{0}{\infty} \\
        &= 1.
    \end{align*}
\end{proof}

\end{document}